\newcommand{\norm}[2][2]{\|{#2}\|_{#1}}
\newcommand{\Ex}[2][]{\ensuremath{\mathbb{E}_{#1}\left[#2\right]}}
\newcommand{\trace}[1]{\ensuremath{\operatorname{tr}\left(#1\right)}}
\newcommand{\RL}{\mathds{R}}
\newtheorem{theorem}{Theorem}
\newtheorem{definition}[theorem]{Definition}
\newtheorem{lemma}[theorem]{Lemma}
\newtheorem{corollary}[theorem]{Corollary}
\newtheorem{observation}{Observation}
\begin{document}

\setcitestyle{aysep={~},yysep={;}}

\title{Random projections for Bayesian regression
% acks vor literatur! \thanks{}
}
%\subtitle{}

%\titlerunning{Short form of title}        % if too long for running head

\author
       {Leo N. Geppert$^1$, Katja Ickstadt$^1$,\\ \vspace{-0.3cm}
       Alexander Munteanu$^{2}$, Jens Quedenfeld$^2$, and Christian Sohler$^2$
       \\ \vspace{0.3cm}
       $^1$ Department of Statistics\\
       \texttt{\{geppert,\,ickstadt\}@statistik.uni-dortmund.de}       \\ 
%       Technische Universit\"{a}t Dortmund\\44221 Dortmund, Germany\\
              %\vspace{0.1cm}

       $^2$ Department of Computer Science\\              
              \texttt{\{alexander.munteanu,\,jens.quedenfeld,\,christian.sohler\}@tu-dortmund.de} \\
              \vspace{0.4cm}
              
        Technische Universit\"{a}t Dortmund\\ 44221 Dortmund, Germany\\
       }

%\authorrunning{Short form of author list} % if too long for running head

%\institute{              
%              Leo N. Geppert and Katja Ickstadt \at
%              Department of Statistics\\Technische Universit\"{a}t Dortmund\\44221 Dortmund, Germany\\
%              \email{\{geppert, ickstadt\}@statistik.tu-dortmund.de}           %  \\
%              \and              
%              Alexander Munteanu, Jens Quedenfeld and Christian Sohler \at
%              Department of Computer Science\\ Technische Universit\"{a}t Dortmund\\ 44221 Dortmund, Germany\\
%              \email{\{alexander.munteanu, jens.quedenfeld, christian.sohler\}@tu-dortmund.de}  
%%             \emph{Present address:} of F. Author  %  if needed         %  \\           
%}
%
\date{November 30, 2015}
% The correct dates will be entered by the editor

\maketitle

%%% Abstract
%\input{Abstract}

\begin{abstract}
This article deals with random projections applied as a data reduction technique for Bayesian regression analysis. We show sufficient conditions under which the entire $d$-dimensional distribution is approximately preserved under random projections by reducing the number of data points from $n$ to $k\in O(\operatorname{poly}(d/\varepsilon))$ in the case $n\gg d$. Under mild assumptions, we prove that evaluating a Gaussian likelihood function based on the projected data instead of the original data yields a $(1+O(\varepsilon))$-approximation in terms of the $\ell_2$ Wasser\-stein distance. Our main result shows that the posterior distribution of Bayesian linear regression is approximated up to a small error depending on only an $\varepsilon$-fraction of its defining parameters. This holds when using arbitrary Gaussian priors or the degenerate case of uniform distributions over $\mathds{R}^d$ for $\beta$. Our empirical evaluations involve different simulated settings of Bayesian linear regression. Our experiments underline that the proposed method is able to recover the regression model up to small error while considerably reducing the total running time.
\end{abstract}

%\keywords{Bayesian regression \and random projections \and data reduction \and posterior approximation}

%%% Introduction
%\input{Introduction}

\section{Introduction}
\label{sec:intro}
In this paper we consider linear regression. Using a linear map $\Pi \in\RL^{k\times n}$ whose choice is still to be defined, we transform the original data set $[X,Y]\in \RL^{n\times (d+1)}$ into a sketch, i.e., a substitute data set, $[\Pi X,\Pi Y]\in\RL^{k\times (d+1)}$ that is considerably smaller. Therefore, the likelihood function can be evaluated faster than on the original data. Moreover, we will show that the likelihood is very similar to the original one. In the context of Bayesian regression we have the likelihood $\mathcal L(\beta | X, Y)$ and additional prior information $p_{\rm pre}(\beta)$ given in form of a prior distribution over the parameters $\beta\in\RL^d$ which we would like to estimate. Our main result is to show that the resulting posterior distribution
\begin{eqnarray}
p_{\rm post}(\beta|X, Y)&\propto f(Y | \beta, X)\cdot p_{\rm pre}(\beta) \label{eq:Bayestheorem}
\end{eqnarray}
will also be well approximated within a small error. Please note that Bayes' Theorem (\ref{eq:Bayestheorem}) contains the probability density function $f(Y | \beta, X)$. From now on, we will concentrate on its interpretation as likelihood function $\mathcal L(\beta | X, Y)$ as a function of the unknown parameter vector $\beta$ as given in (\ref{eq:Bayestheorem2})
\begin{eqnarray}
p_{\rm post}(\beta|X, Y)&\propto \mathcal L(\beta | X,Y)\cdot p_{\rm pre}(\beta).
\label{eq:Bayestheorem2}
\end{eqnarray}

The main idea of our approach is given in the following scheme:
$$
\begin{array}{rcl}
[X,Y] & \xrightarrow{\quad \Pi\quad} & \,\,[\Pi X,\Pi Y]\\
\downarrow \quad & &\quad\quad \downarrow\\
p_{\rm post}(\beta | X, Y) \hspace*{-.2cm} & \approx_\varepsilon & \hspace*{-.3cm} p_{\rm post}(\beta | \Pi X,\Pi Y).
\end{array}
$$

More specifically, we can reduce the number of observations from the number of input points $n$ to a target dimension $k\in O(\operatorname{poly}(d/\varepsilon))$, which, in particular, is independent of $n$. Thus, the running time of all subsequent calculations does not further depend on $n$. For instance, a Markov Chain Monte Carlo (MCMC) sampling algorithm may be used to obtain samples from the unknown distribution. Using the reduced data set will speed up the computations considerably. The samples remain sufficiently accurate to resemble the original distribution and also to make statistical predictions that are nearly indistinguishable from the predictions that would have been made based on the original sample. Note, that mathematically it is possible to achieve a similar reduction by setting $\Pi=X^T$ without incurring any error. From the resulting matrix $[X^TX, X^TY]$ it would be possible to compute or evaluate the exact likelihood respectively posterior in some analytically tractable cases, which is the standard textbook approach for classical linear regression and Bayesian linear regression with Gaussian prior and independent normal error, \citep[cf.][]{Bishop:2006, Hastie2009}. However, this approach is not numerically stable leading to ill-conditioned covariance matrices, which is known from the literature \citep[cf.][]{Golub1965, LawsonH1995, GolubVanLoan2013} and evident from our experiments. For that reason, the exact calculation is not an option in general. Instead, approximation algorithms are an alternative, where the error can be controlled.

Using no reduction technique is also not an option, since then even likelihood evaluations depend at least linearly on $n$. This does not pose a problem for small data sets. For larger $n$ this is also still possible, but employing reduction techniques can already be beneficial to reduce the running time. For data sets that do not fit into the working memory, intelligent solutions are needed to avoid frequent swapping to slower secondary memory. However, for really massive data or infinite data streams, already the \emph{much simpler} problem of computing the exact ordinary least squares estimator in one pass over the data requires at least $\Omega(n)$ space \citep{CW09}. This makes the task impossible on finite memory machines when $n$ grows large enough.

There are different computational models to deal with massive data sets in streaming and distributed environments. We focus on the streaming model, formally introduced in \cite{Muthukrishnan05}. A data stream algorithm is given an input stream of items, like numerical values, points in $\mathds{R}^d$ or edges of a graph at a high rate. As the items arrive one by one it maintains some sort of summary of the data that is observed so far. This can be a subsample or a linear sketch as described above. The linearity allows for flexible dynamic updates of the sketch as we will discuss later. At any time, the memory used by the algorithm is restricted to be sublinear, usually at most polylogarithmic in the number of items. For geometrical problems the dependence on the dimension $d$ is often restricted to be at most a small polynomial. Also, the algorithm is allowed to make only one single pass over the data. With these restrictions in mind, it is clear that the sketching matrix $\Pi\in\mathds{R}^{k \times n}$ cannot be explicitly stored in memory. In fact it has to fulfill the following criteria to allow for a streaming algorithm:

\begin{enumerate}
	\item $\Pi[X,Y]$ approximates $[X,Y]$ well in the above sense.
	\item $\Pi$ can be stored succinctly.
	\item We can efficiently generate the entries of $\Pi$.
\end{enumerate}

We will see that the structured randomized constructions of $\Pi$ can be provably achieved using random bits of only limited independence. This means that the entries of $\Pi$ need not be fully independent. However, if we choose a small number of entries, they behave as if they were independent. In particular the entries can be stored implicitly, meeting the independence requirements by using hash functions. These can be evaluated very efficiently and make the memory dependency on $n$ only logarithmic.

Although the techniques presented in the following are employed to make the computations possible in a streaming setting, the results are of interest also in the non-streaming setting whenever large data sets can be reduced to meet time and memory constraints.

%%% Related Work
%\input{RelatedWork}

\section{Background and Related Work}
\label{sec:background}
Dimensionality reduction techniques like principal component analysis (PCA) \citep{Jolliffe:2002} and random projections have been widely used in Statistics and Computer Science. However, their focus is usually on reducing the number of variables. Our method aims to reduce the number of observations while keeping the algebraic structure of the data. This leads to a speed-up in the subsequent (frequentist or Bayesian) regression analysis, because the running times of commonly used algorithms heavily depend on $n$. Basic techniques based on PCA include principle component regression and partial least squares \citep{Hastie2009}. More recent results using PCA stem from the theory of core-sets for the $k$-means clustering problem and address the problem of computing a small set of data that approximates the original point set with respect to the given objective up to little, say $(1\pm\varepsilon)$, error \citep{FeldmanSS13}. The concept of core-sets is related to the early works of \citet{MadiganRDNPR02} and \citet{DuMouchelVJCP99} on \emph{data-squashing} that seeks for data compression based on the likelihood of the observations. One of the more recent contributions is \citet{quiroz2014}. They suggest inclusion probabilities proportional to the observations' contribution to the likelihood, which is approximated by a Gaussian Process or a thin-plate spline approximation. Data-squashing can lead to a considerable reduction in the necessary number of observations, however there is a lack of approximation guarantees. These references show that in the advent of massive data sets, besides the effort in reducing dimensionality, there is also need to reduce the number of observations without incurring loss of too much statistical information.

Random projections have been studied in the context of low-rank approximation\linebreak \citep{CohenEMMP15}, least squares regression \citep{Sarlos2006,CW09}, Gaussian process regression \citep{BanerjeeDT12}, clustering problems \citep{BoutsidisZD10, KerberR14, CohenEMMP15}, classification tasks \citep{PaulBMD14} and compressed sensing \citep{CandesRT06,Donoho06}. Random projections are used similarly to our work, to approximate a collection of subspaces, consisting only of sparse vectors \citep{Baraniuk07}. Also, Bayesian inference has been proposed for efficient computation in compressed sensing  \citep{JiC07}.

Recently there has been a series of works dealing with the statistical aspects of randomized linear algebra al\-go\-rithms. In \cite{RaskuttiM15} and \cite{MaMY14}, the statistical properties of subsampling approaches based on the \emph{statistical leverage scores} of the data are investigated in detail. Deviating from the worst case algorithmic perspective, it was shown in \cite{MaMY14} that on average the leverage scores behave quite uniformly if the data is generated following a standard linear regression model. In \cite{YangMM15}, several sketching and subsampling methods are used for fast preconditioning before solving the ordinary least squares (OLS) estimators on the subsampled data using state of the art OLS solvers. Moreover, they give parallel and distributed algorithms and extensive empirical evaluations on large scale data for this task. Our work, while aware of providing worst case guarantees, continues the discussion of statistical properties to the Bayesian setting.

Bayesian regression analysis for large scale data sets has been considered before. \citet{Dunson2013} proposed reducing the number of variables via random projections as a preprocessing step in the large $d$, small $n$ scenario. They show that under several assumptions the approximation converges to the desired posterior distribution, which is not possible in general, since it was shown in \cite{BoutsidisM14} that such dimensionality reduction oblivious to the target variable causes additive error in the worst case. 

For the large $n$, small $d$ case, \citet{balakrishnan2006} proposed a one-pass algorithm that reads the data block-wise and performs a certain number of MCMC steps. When the next block is read, the algorithm keeps or replaces some of the data points based on weights that keep track of the importance of the data. The selection rule is justified empirically but lacks theoretical guarantees. Theoretical support is only given in the univariate case based on central limit theorems for Sequential Monte Carlo methods.

When allowing more passes over the data, TSQR \citep{DemmelGHL12} is a QR decomposition which works especially well in the large $n$, small $d$ case and can easily be parallelized in the MapReduce setting, as studied by \cite{ConstantineG11} and \cite{BensonGD13}. TSQR provides a numerically stable decomposition, which can be used as a preprocessing step prior to MCMC inference which can be conducted with high accuracy. This method, however, depending on the computational setting, has a number of limitations. It only works when the data is given row-by-row and the expensive decomposition has to be carried out a linear number of times, resulting in a total lower bound on the running time of $\Omega(n d^2)$ \citep[cf.][]{DemmelGHL12}. The method is restricted to $\ell_2$ regression. Our method of random projections is capable of going beyond these limitations. In particular, it can be extended to $\ell_p$ regression and more generally to robust \emph{M}-estimators \citep{CW15}. This flexibility comes at the price of a loss in accuracy, however, this loss is controllable by bounding parameters and does not lead to invalid inference.

Another approach by \cite{Bardenet14} tries to subsample the data to approximate the acceptance rule in each iteration of an MCMC sampler. The decision is shown to be similar to the original with high probability in each step. The number of samples is highly dependent on the variance of the logarithm of likelihood ratios. The method may be useful for interesting and intractable cases when the variance can be bounded.

While frequentist linear regression can be solved straightforwardly by computing the projection of the target variable to the subspace spanned by the data, Bayesian regression is typically computationally more demanding. In some cases, it is possible to calculate the posterior distribution analytically, but in general this is not true. For that reason, an approximation of the posterior distribution is needed. MCMC methods are one possibility and standard in Bayesian analysis. They are reliable, but can take considerable time before they converge and sample from the desired posterior distribution. Moreover, the running time grows with the number of observations in the data set.

The main bottleneck of a lot of Bayesian analysis methods including MCMC is the repeated evaluation of the likelihood. The running time of each evaluation grows linearly with the number of observations in the data set. There are several approaches trying to reduce the computational effort for Bayesian regression analysis by employing different algorithms that may perform more efficient in certain settings. Approximate Bayesian Computing (ABC) and Integrated Nested Laplace Approximations (INLA) both fall into this category. The main idea behind ABC is to avoid the exact evaluations by approximating the likelihood function using simulations \citep{abc:csillery:2010}. INLA \citep{Rue:INLA, Martinsetal:2013} on the other hand is an approximation of the posterior distribution that is applicable to models that fall into the class of so-called latent Gaussian models. Both methods can lead to a considerable speed-up compared to standard MCMC algorithms.

Note however, that the speed-up is achieved by changing the algorithm which is used to conduct the analysis. This is different in our approach, which reduces the number of observations in the data set while approximately retaining its statistical properties. The running times of many algorithms including MCMC algorithms highly depend on the number of observations, which means that our proposed method also results in a speed-up of the analysis. In this article, we focus on MCMC methods for the analysis, but in principle, as our method provably approximates the posterior, all algorithms that assess the posterior can be employed. We did not use ABC since it is only suitable for summary statistics of very low dimension ($d <10$) \citep{BeaumontABC:2002, abc:csillery:2010}. However, we have tried INLA on a small scale, achieving comparable results as with MCMC, making the running time of the analysis independent of $n$. Likewise one could consider calculating the exact formulae for analytically tractable cases of the posterior. However, we concentrate on MCMC methods because of their general applicability and reliability.

New directions in Bayesian data analysis in the context of Big Data are surveyed in \cite{welling:2014}. Our work directly suits the criteria that is proposed in that reference for the large $n$ case in streaming as well as distributed computation environments.
% Nicht genauer darauf eingehen, aber bei Nachfragen: Fokus auf MCMC, au�erdem klappt rstan,
% w�hrend INLA bei gr��eren Datens�tzen abst�rzt

%%% Preliminaries
%\input{Preliminaries}

\section{Preliminaries}
\label{sec:prelim}

\subsection{General notation}

For the sake of a brief presentation, we introduce some notation. We denote by $[n]=\{1,\ldots,n\}$ the set of all positive integers up to $n$. For a probability measure $\lambda$, let $\Ex[\lambda]{X}$ be the expected value of $X$ with respect to $\lambda$. We skip the subscript in $\Ex{X}$ if the probability measure is clear from the context. For a matrix $M\in \RL^{n\times d}$ we let $M=U\Sigma V^T$ denote its singular value decomposition (SVD), where $U\in \RL^{n\times d}$ and $V\in\RL^{d\times d}$ are unitary matrices spanning the column\-space and rowspace of $M$, respectively. $\Sigma\in \RL^{d\times d}$ is a diagonal matrix, whose elements $\sigma_1\geq\ldots\geq\sigma_d$ are the singular values of $M$. We denote by $\sigma_{\rm max}=\sigma_1$ the largest and by $\sigma_{\rm min}=\sigma_d$ the smallest singular value of $M$ and write $\sigma_i(M)$ to make clear the $\sigma_i$ belong to $M$. The trace of $M^TM$ equals the sum of the squared singular values, i.e., $\trace{M^TM}=\sum_{i=1}^d \sigma_i^2(M)$. We assume w.l.o.g. all matrices to have full rank and stress that all our proofs carry out similarly to our presentation if the matrices are of lower rank. One might even use knowledge about lower rank to reduce the space and time complexities to bounds that only depend on the rank rather than on the number of variables.

\subsection{Bayesian regression}
\label{sec:prelim:bayesreg}

A linear regression model is given in the following equation:
\begin{equation*}
Y = X\beta + \xi.
%\label{regmod}
\end{equation*}

$Y \in \RL^n$ is a random variable containing the values of the response, where $n$ is the number of observations in the data set. $X \in \RL^{n\times d}$ is a matrix containing the values of the $d$ independent variables. We denote by $\xi \sim N(0, \varsigma^2 I_n)$ an $n$-dimensional random vector that models the unobservable error term. The dependent variable $Y$ then follows a Gaussian distribution, \mbox{$Y \sim N(X\beta, \varsigma^2 I_n)$}. The corresponding probability density function is
%\begin{eqnarray*}
%&~& f(z | X\beta, \Sigma) \notag\\
%&=& (2\pi)^{-\frac{n}{2}} \left\vert \Sigma \right\vert ^{-\frac{1}{2}} \exp{\left(-\frac{1}{2}(X\beta-z)^T\Sigma^{-1}(X\beta-z)\right)},
%%\label{normalpdf}
%\end{eqnarray*}
\begin{eqnarray*}
f(y | X\beta, \Sigma) &=& (2\pi)^{-\frac{n}{2}} \left\vert \Sigma \right\vert ^{-\frac{1}{2}} \exp{\left(-\frac{1}{2\varsigma^2}\norm{X\beta-y}^2\right)},
%\label{normalpdf}
\end{eqnarray*}
where $\Sigma = \varsigma^2 I_n$.

In a Bayesian setting, $\beta \in \RL^d$ is the unknown parameter vector which is assumed to follow an unknown distribution $p(\beta|X,Y$) called the posterior distribution. Prior knowledge about $\beta$ can be modeled using the prior distribution $p(\beta)$. The posterior distribution is a compromise between the prior distribution and the observed data. 

In general, the posterior distribution cannot be calculated analytically. In this paper, we determine the posterior distribution employing Markov Chain Monte Carlo methods, even though the posterior is explicitly known. Regardless of the computational problems related to these explicit formulae (cf. Section \ref{sec:intro}), we focus on MCMC, because this work forms the basis for further research on more complex models where analytical solutions are not obtainable. Possible extensions are hierarchical models and mixtures of normal distributions (cf. Section \ref{sec:conclusion}). Furthermore, we follow this strategy to rule out possible interaction effects between sketching and MCMC that might occur even in these basic cases. However, our empirical evaluation indicates that there are none.

\subsection{Norms and metrics}

Before going into details about random projections and subspace embeddings, let us first define the matrix and vector norms used in this paper as well as the metric that we are going to use for quantifying the distance between distributions.
\begin{definition}[spectral norm]
\label{def:spectralnorm}
The spectral or operator norm of a matrix $A \in \RL^{n \times d}$ is defined as \[\norm{A}=\sup\limits_{x\in\RL^d\setminus\{0\}} \frac{\norm{Ax}}{\norm{x}},\]
where $\norm{y}={(\sum_{i=1}^m y_i^2)}^\frac 1 2$ denotes the Euclidean vector norm for any $y\in \RL^m$.
\end{definition}

A useful fact that is straightforward from Definition \ref{def:spectralnorm} is that the spectral norm of a matrix $M$ equals its largest singular value, i.e., we have $\norm{M}=\sigma_{\rm max}(M)$ \citep[cf.][]{Horn1990}.

In order to quantify the distance between probability measures and in particular between the original posterior and its approximated counterpart we will need some further definitions. For this sake, given two probability measures $\gamma,\nu$ over $\RL^d$, let $\Lambda(\gamma,\nu)$ denote the set of all joint probability measures on $\RL^{d}\times \RL^{d}$ with marginals $\gamma$ and $\nu$, respectively.

\begin{definition}[Wasserstein distance, cf. \cite{Villani2009}]
\label{def:wassdist}
Given two probability measures $\gamma,\nu$ on $\RL^d$ the $\ell_2$ Wasserstein distance between $\gamma$ and $\nu$ is defined as
\[
%\begin{split}
\mathcal W_2(\gamma,\nu) =\left( \inf\limits_{\lambda\in\Lambda(\gamma,\nu)} \int_{\RL^{d}\times \RL^{d}} \norm{x-y}^2 \;\;\emph{d}\lambda(x,y) \right)^{\frac{1}{2}}
=\inf\limits_{\lambda\in\Lambda(\gamma,\nu)} \Ex[\lambda]{\norm{x-y}^2} ^{\frac{1}{2}}
%\end{split}
\]
\end{definition}

From the definition of the Wasserstein distance we can derive a measure of how much points drawn from a given distribution will spread from the origin. The \emph{Wasserstein weight} can be thought of as a norm of a probability measure.

\begin{definition}[Wasserstein weight]
\label{def:wassweight}
We define the $\ell_2$ Wasserstein weight of a probability measure $\gamma$ as\\
\[
%\begin{split}
\mathcal W_2(\gamma)=\mathcal W_2(\gamma,\delta)\\
=\left( \int_{\RL^{d}} \norm{x}^2 \;\;\emph{d}\gamma \right)^{\frac{1}{2}}=\Ex[\gamma]{\norm{x}^2}^{\frac{1}{2}}
%\end{split}
\]
where $\delta$ denotes the Dirac delta function.

\end{definition}

\subsection{Random projections and $\varepsilon$-subspace embeddings}
\label{sec:rapro:subspace}
The following definition of so called $\varepsilon$-subspace embeddings will be central to our work. Such an embedding can be used to reduce the size of a given data matrix while preserving the algebraic structure of its spanned subspace up to $(1\pm\varepsilon)$ distortion. Before we summarize several methods to construct a subspace embedding for a given input matrix, we give a formal definition. Here and in the rest of the paper we assume $0<\varepsilon\leq 1/2$.

\begin{definition}[$\varepsilon$-subspace embedding]
\label{def:subspace}
Given a matrix $U\in \RL^{n \times d}$ with orthonormal columns, an integer $k\leq n$ and an approximation parameter $0<\varepsilon\leq 1/2$, an $\varepsilon$-subspace embedding for $U$ is a map $\Pi: \RL^n \rightarrow \RL^k$ such that
\begin{eqnarray}
\label{def:subspace1}
(1-\varepsilon)\norm{Ux}^2\leq \norm{\Pi Ux}^2\leq (1+\varepsilon)\norm{Ux}^2
\end{eqnarray}
holds for all $x\in \RL^d$, or, equivalently
\begin{eqnarray}
\label{def:subspace2}
\norm{U^T\Pi^T\Pi U - I_d}\leq \varepsilon.
\end{eqnarray}
\end{definition}

Inequality (\ref{def:subspace1}) is mainly used in this paper, but Inequality (\ref{def:subspace2}) is more instructive in the sense that it makes clear that the embedded subspace is close to the identity, not involving much scale or rotation.

Note that an $\varepsilon$-subspace embedding $\Pi$ for the column\-space of a matrix $M$ preserves its squared singular values up to $(1\pm\varepsilon)$ distortion, which in particular means that it also preserves its rank. We prove this claim for completeness.

\begin{observation}
\label{obs:singular}
Let $\Pi$ be an $\varepsilon$-subspace embedding for the column\-space of $M\in\RL^{n\times d}$. Then $$(1-\varepsilon)\,\sigma_i^2(M) \leq \sigma_i^2(\Pi M) \leq (1+\varepsilon)\, \sigma_i^2(M)$$ and $$(1-2\varepsilon)\,\sigma_i^{-2}(M) \leq \sigma_i^{-2}(\Pi M) \leq (1+2\varepsilon)\,\sigma_i^{-2}(M).$$
\end{observation}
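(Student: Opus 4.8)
The plan is to prove both chains of inequalities by reducing the statement about singular values to the subspace-embedding property (\ref{def:subspace1}) applied to the orthonormal basis of the columnspace of $M$. Writing the SVD as $M = U\Sigma V^T$ with $U\in\RL^{n\times d}$ orthonormal, $\Sigma$ diagonal with entries $\sigma_i(M)$, and $V$ orthogonal, I would note that $\Pi$ is by hypothesis an $\varepsilon$-subspace embedding for the columnspace of $M$, i.e.\ for $U$. The key algebraic observation is that the singular values of $\Pi M$ are governed by the eigenvalues of $(\Pi M)^T(\Pi M) = V\Sigma U^T\Pi^T\Pi U \Sigma V^T$, which, since $V$ is orthogonal, are exactly the eigenvalues of $\Sigma\, (U^T\Pi^T\Pi U)\, \Sigma$.

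First I would establish the upper chain for $\sigma_i^2$. The cleanest route is the Courant--Fischer / Rayleigh quotient characterization of eigenvalues. Since $\sigma_i^2(\Pi M)$ is the $i$-th largest eigenvalue of $(\Pi M)^T(\Pi M)$, I would write it as a min-max over $i$-dimensional subspaces of the Rayleigh quotient $\norm{\Pi M z}^2/\norm{z}^2$. For any $z$, setting $x$ so that $Mz = Ux$ (i.e.\ absorbing $\Sigma V^T z$ into the argument of $U$), inequality (\ref{def:subspace1}) gives $(1-\varepsilon)\norm{Mz}^2 \leq \norm{\Pi M z}^2 \leq (1+\varepsilon)\norm{Mz}^2$ uniformly in $z$. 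Plugging these two-sided bounds into the min-max expression and using that $\sigma_i^2(M)$ is the corresponding min-max of $\norm{Mz}^2/\norm{z}^2$, the multiplicative factors $(1\pm\varepsilon)$ pull straight through, yielding $(1-\varepsilon)\sigma_i^2(M) \leq \sigma_i^2(\Pi M)\leq (1+\varepsilon)\sigma_i^2(M)$. This handles the first display.

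For the second chain, involving the inverse squared singular values, I would invert the first chain. From $(1-\varepsilon)\sigma_i^2(M) \leq \sigma_i^2(\Pi M) \leq (1+\varepsilon)\sigma_i^2(M)$ and positivity of all quantities (full-rank assumption), taking reciprocals reverses the inequalities to give $\frac{1}{1+\varepsilon}\sigma_i^{-2}(M) \leq \sigma_i^{-2}(\Pi M)\leq \frac{1}{1-\varepsilon}\sigma_i^{-2}(M)$. The remaining step is to replace the awkward factors $\frac{1}{1\pm\varepsilon}$ by the clean linear bounds $(1\mp 2\varepsilon)$. This is where the standing assumption $0<\varepsilon\leq 1/2$ is used: one checks the elementary numerical inequalities $\frac{1}{1+\varepsilon} \geq 1-2\varepsilon$ (in fact $\frac{1}{1+\varepsilon}\geq 1-\varepsilon$ already suffices for this side, but $1-2\varepsilon$ is a valid lower bound too) and $\frac{1}{1-\varepsilon}\leq 1+2\varepsilon$, the latter being equivalent to $1 \leq (1-\varepsilon)(1+2\varepsilon) = 1+\varepsilon-2\varepsilon^2$, i.e.\ $2\varepsilon^2 \leq \varepsilon$, which holds precisely because $\varepsilon\leq 1/2$.

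I do not expect a genuine obstacle here; the statement is essentially a bookkeeping consequence of the embedding property combined with min-max. The only point that requires a little care is making the reduction from $\norm{\Pi M z}^2$ to the embedding inequality (\ref{def:subspace1}) fully rigorous, since (\ref{def:subspace1}) is stated for $U$ rather than for $M$ directly; one must verify that $\{Mz : z\in\RL^d\}$ equals the columnspace of $U$ so that every $Mz$ can be written as $Ux$ for a suitable $x$, which is immediate from $M=U\Sigma V^T$ with $\Sigma$ invertible and $V$ orthogonal. After that, the two displays follow mechanically, with the final linear-versus-reciprocal comparison being the only spot where $\varepsilon\leq 1/2$ is genuinely invoked.
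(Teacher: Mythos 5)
Your proposal is correct and takes essentially the same route as the paper: the first chain via the Courant--Fischer min--max characterization with the two-sided embedding inequality pulled through (after observing that the embedding property applies to every vector $Mz$ in the columnspace of $U$), and the second chain deduced from the first together with the standing assumption $\varepsilon\leq 1/2$. The only cosmetic difference is in the second part, where the paper bounds the difference $\left|\sigma_i^{-2}(M)-\sigma_i^{-2}(\Pi M)\right|$ via $\frac{\varepsilon}{1-\varepsilon}\leq 2\varepsilon$, while you invert the first chain and compare $\frac{1}{1\mp\varepsilon}$ with $1\pm 2\varepsilon$ --- the same elementary consequence of $\varepsilon\leq 1/2$ in a different guise.
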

\begin{proof}
For the first claim, we make use of a min-max representation of the singular values that is known as the Courant-Fischer theorem \citep[cf.][]{Horn1990}. In the following derivation we choose $x^*$ to be the maximizer of (\ref{sing:1}) and $S^*$ the minimizer of (\ref{sing:2}).
\begin{eqnarray}
\sigma_i^2(\Pi M)&=&\min_{S\in\RL^{(i-1)\times d}}\max_{Sx=0,\norm{x}=1}\norm{\Pi Mx}^2\notag\\
&\leq & \max_{S^*x=0,\norm{x}=1}\norm{\Pi Mx}^2\label{sing:1}\\
&= & \norm{\Pi Mx^*}^2\notag\\
&\leq & (1+\varepsilon)\,\norm{Mx^*}^2\notag\\
&\leq & (1+\varepsilon)\,\max_{S^*x=0,\norm{x}=1}\norm{Mx}^2\notag\\
&= & (1+\varepsilon)\,\min_{S\in\RL^{(i-1)\times d}}\max_{Sx=0,\norm{x}=1}\norm{Mx}^2\label{sing:2}\\
&= & (1+\varepsilon)\,\sigma_i^2(M).\notag
\end{eqnarray}
The lower bound can be derived analogously using the lower bound of (\ref{def:subspace1}).

Now we use the first claim to prove the second. To this end, we bound the difference
\begin{eqnarray*}
\left|\frac{1}{\sigma_i^{2}(M)} - \frac{1}{\sigma_i^{2}(\Pi M)}\right| &=& \frac{|\sigma_i^{2}(M) - \sigma_i^{2}(\Pi M)|}{\sigma_i^{2}(M)\sigma_i^{2}(\Pi M)}
\leq \frac{\varepsilon \sigma_i^{2}(M) }{(1-\varepsilon)\,\sigma_i^{4}(M)}\\
&=& \frac{\varepsilon}{(1-\varepsilon)}\,\sigma_i^{-2}(M)
\leq 2\varepsilon\,\sigma_i^{-2}(M). 
\end{eqnarray*}
\end{proof}

There are several ways to construct an $\varepsilon$-subspace embedding. One of the more recent methods is using a so called \emph{graph-sparsifier}, which was initially introduced for the efficient construction of sparse sub-graphs with good expansion properties \citep{BatsonSS12}. The work of \cite{BoutsidisDM2013} adapted the technique to work for ordinary least-squares regression. While the initial construction was deterministic, they also gave alternative constructions combining the deterministic decision rules with non-uniform random sampling techniques. Another approach is subspace preserving sampling of rows from the data matrix. This technique was introduced by \cite{DrineasMM06} for $\ell_2$ regression and generalized to more general subspace sampling for the $p$-norm \citep{DasguptaDHKM09}. The sampling is done proportional to the so called \emph{statistical leverage scores}. These techniques have recently been analyzed and extended in a statistical setting as opposed to the algorithmic worst case setting \citep{RaskuttiM15, MaMY14}. All the aforementioned methods are in principle applicable whenever it is possible to read the input multiple times. For instance, one needs two passes over the data to perform the subspace sampling procedures, one for preprocessing the input matrix and another for computing the probabilities and for the actual sampling. 
This way one might reach a stronger reduction or better statistical properties since their (possibly random) construction depends on the input itself and therefore uses more information.%For the streaming setting formally introduced in \cite{Muthukrishnan05} this is not a desirable property whenever we cannot read the data multiple times.

In principle our approximation results are independent of the actual method used to calculate the embedding as long as the property given in Definition \ref{def:subspace} is fulfilled. However, when the number of observations grows really massive or we deal with an infinite stream, then the data can only be read once, given time and space constraints. In order to use $\varepsilon$-subspace embeddings in a single-pass streaming algorithm, we consider the approach of so called \emph{oblivious} subspace embeddings in this paper. These can be viewed as distributions over appropriately structured $k\times n$ matrices from which we can draw a realization $\Pi$ independent of the input matrix. It is then guaranteed that for any fixed matrix $U$ as in Definition \ref{def:subspace} and failure probability $0<\alpha\leq 1/2$, $\Pi$ is an $\varepsilon$-subspace embedding with probability at least $1-\alpha$. The results of our work are always conditioned on the event that the map $\Pi$ is an $\varepsilon$-subspace embedding omitting to further mention the error probability $\alpha$. The reader should keep in mind that there is the aforementioned possibility of failure during the phase of sketching the data.

Instead of a subspace embedding, one might consider $\Pi = X^{T}$ a suitable choice for a
sketching matrix leading to a sketch of size $d\times (d+1)$ from which the exact likelihood and posterior can be characterized in some analytically tractable cases. However, it is well known that using the resulting matrix $X^T [X,Y]$ may result in numerical instabilities leading to bad conditioning of the covariance matrix $X^TX$. This effect can occur in the presence of collinearities or slight variations from orthogonality independent of the size of the data, and may lead to highly inaccurate frequentist estimators, \citep[cf.][]{LawsonH1995}. In a Bayesian setting, as we consider in this work, the instabilities result in extremely large variances of the MCMC sample, leading to simulations that do not converge. We will observe this behavior in Section \ref{sec:simstudy}.

We therefore consider the following approaches for obtaining oblivious $\varepsilon$-subspace embeddings:

1. \textbf{The Rademacher Matrix (RAD)}:
$\Pi$ is obtained by choosing each entry independently from $\{-1,1\}$ with equal probability. The matrix is then rescaled by $\frac{1}{\sqrt{k}}$. This method has been shown by \cite{Sarlos2006} to form an $\varepsilon$-subspace embedding with probability at least $1-\alpha$ when choosing essentially $k=O(\frac{d\log (d/\alpha)}{\varepsilon^2})$. This was later improved to $k=O(\frac{d+\log(1/\alpha)}{\varepsilon^2})$ in \cite{CW09}, which was recently shown to be optimal by \cite{NelsonN13lower}. While this method yields the best reduction among the different constructions that we consider in the present work, the RAD embedding has the disadvantage that we need $\Theta(ndk)$ time to apply it to an $n\times d$ matrix when streaming the input in general. If the input is given row by row or at least block by block, our implementation applies a fast matrix multiplication algorithm to each block. We remark that it is provably sufficient that the $\{-1,1\}$-entries in each row of the RAD matrix are basically four wise independent, i.e., when considering up to four entries of the same row, these behave as if they were fully independent. Such random numbers can be generated using a hashing scheme that generates BCH codes using a seed of size $O(\log n)$. This has first been noticed by \citet{AlonMS99}. In our implementation we have used the four wise independent BCH scheme as described in \citet{RusuD07}.

2. \textbf{The Subsampled Randomized Hadamard Transform (SRHT)} (originally from \cite{AilonL09}) is an embedding that is chosen to be $\Pi=RH_mD$ where $D$ is an $m\times m$ diagonal matrix where each entry is independently chosen from $\{-1,1\}$ with equal probability. The value of $m$ is assumed to be a power of two. It is convenient to choose the smallest such number that is not smaller than $n$. $H_m$ is the \emph{Hadamard-matrix} of order $m$ and $R$ is a $k\times m$ row sampling matrix. That is, each row of $R$ contains exactly one $1$-entry and is $0$ everywhere else. The index of the $1$-entry is chosen uniformly from $[m]$ i.i.d. for every row. The matrix is then rescaled by $\frac{1}{\sqrt{k}}$. Since $m$ is often larger than $n$, the input data must be padded with $0$-entries to compute the product $\Pi X$. Of course, it is not necessary to do that explicitly since all multiplications by zero can be omitted. The target dimension needed to form an $\varepsilon$-subspace embedding with probability at least $1-\alpha$ using this family of matrices was shown by \cite{BoutsidisG13} to be $k=O(\frac{(\sqrt{d}+\sqrt{\log n})^2 \log(d/\alpha)}{\varepsilon^2})$, which improved upon previous results from \cite{Drineas11}. Using this method, we have a small dependency on $n$, which is negligible whenever $n=O(\exp(d))$. This is often true in practice when $d$ is reasonably large. Compared to the RAD method, the dependency on the dimension $d$ is worse by essentially a factor of $O(\log d)$. It is known that $k=\Omega(d\log d)$ is necessary due to the sampling based approach. This was shown by \emph{reduction} from the coupon collectors problem, i.e., solving one problem can be \emph{reduced} to solving the other. See \cite{HalkoMT11} for details. The benefit that we get is that due to the inductive structure of the Hadamard matrix, the embedding can be applied in $O(nd\log k)$ time, which is considerably faster. It has been noticed in the original paper \citep{AilonL09} that the construction is closely related to four wise independent BCH codes. To our knowledge, there is no explicit proof that it is sufficient to use random bits of little independence. However, we use again the four wise BCH scheme for the implicit construction of the matrix $D$ and the linear congruency generator from the standard library of C++ 11 for the uniform subsampling matrix $R$. We will see in the empirical evaluation that this works well in practice.

3. \textbf{The Clarkson Woodruff (CW) sketch} \citep{ClarksonW13} is the most recent construction that we consider in this article. In this case the embedding is obtained as $\Pi=\Phi D$. The $n \times n$ matrix $D$ is constructed in the same way as the diagonal matrix in the SRHT case. Given a random map $h:[n]\rightarrow [k]$ such that for every $i\in[n]$ its image is chosen to be $h(i)=t\in[k]$ with probability $\frac{1}{k}$, again $\Phi$ is a binary matrix whose $1$-entries can be defined by $\Phi_{h(i),i}=1$. All other entries are $0$. This is obviously the fastest embedding, due to its sparse construction. It can be applied to any matrix $X\in\RL^{n\times d}$ in $O(\operatorname{nnz}(X))=O(nd)$ time, where $\operatorname{nnz}(X)$ denotes the number of non-zero entries in $X$. This is referred to as \emph{input sparsity time} and is clearly optimal up to small constants, since this is the time needed to actually read the input from a data stream or external memory, which dominates the sketching phase. However, its disadvantage is that the target dimension is $k=\Omega(d^2)$ \citep{NelsonN13lowerd2}. Roughly spoken, this is necessary due to the need to obliviously and perfectly hash $d$ of the standard basis vectors spanning $\RL^n$. Improved upper bounds over the original ones of \cite{ClarksonW13} show that $k=O(\frac{d^2}{\varepsilon^2\alpha})$ is sufficient to draw an $\varepsilon$-subspace embedding from this distribution of matrices with probability at least $1-\alpha$ \citep{NelsonN13}. This reference also shows that it is sufficient to use only four wise independent random bits to generate the diagonal matrix $D$. Again, in our implementation we use the four wise independent BCH scheme from \cite{RusuD07}. Moreover, $\Phi$ can be constructed using only pairwise independent entries. This can be achieved very efficiently using the fast universal hashing scheme introduced by \cite{DietzfelbingerHKP97} which we have employed in our implementation. The space requirement is only $O(\log n)$ for a hash function from this class. For a really fast implementation using bit-wise operations the actual size parameters of the sketch are chosen to be the smallest powers of two that are larger than the required sizes $n$ and $k$.

\begin{table}[tb]
\begin{center}
\begin{tabular}{lll}
\toprule
sketching method & target dimension & running time\\
\midrule
RAD   & $O\left(\frac{d+\log(1/\alpha)}{\varepsilon^2}\right)$ & $O\left(ndk\right)$\\
SRHT & $O\left(\frac{d \cdot \log(d/\alpha)}{\varepsilon^2}\right)$ & $O\left(nd\log k\right)$\\
CW    & $O\left(\frac{d^2}{\varepsilon^2\alpha}\right)$ & $O\left(\operatorname{nnz}(X)\right)=O(nd)$ \\
\bottomrule
\end{tabular}
\end{center}
\caption{Comparison of the three considered $\varepsilon$-subspace embeddings; $\operatorname{nnz}(X)$ denotes the number of non-zero entries in $X$, $\alpha$ denotes the failure probability.}
\label{tab:sketchcomparison}
\end{table}

Table \ref{tab:sketchcomparison} summarizes the above discussion, in particular the trade-off behavior between time and space complexity of the presented sketching methods. While in general one is interested in the fastest possible application time, memory constraints might make it impossible to apply the CW sketch due to its quadratic dependency on $d$. Taking it the other way, for a fixed sketching size, CW will give the weakest approximation guarantee \citep[cf.][]{YangMM15}. For really large $d$, even the $O(d\log d)$ factor of SRHT might be too large so that we have to use the slowest RAD sketching method.

\subsection{Extension to the streaming model}

The presented reduction techniques are of interest whenever we deal with medium to large sized data for reducing time and space requirements. However, when the data grows massive, we need to put more importance on the computational requirements. We therefore want to briefly discuss and give references to some of these technical details. For example, while the dimensions of the resulting sketches do not depend on $n$, this is not true for the embedding matrices $\Pi\in \RL^{k\times n}$. However, due to the structured constructions that we have surveyed above, we stress that the sketching matrices can be stored implicitly by using the different hash functions of limited independence. The hash functions used in our implementations are the four wise independent BCH scheme used in the seminal work of \cite{AlonMS99} and the universal hashing scheme by \cite{DietzfelbingerHKP97}. These can be evaluated very efficiently using bit-wise operations and can be stored using a seed whose size is only $O(\log n)$. Note that even this small dependency on $n$ is only needed in the sketching phase. After the sketch has been computed, the space requirements will be independent of $n$. A survey and evaluation of alternative hashing schemes can be found in \cite{RusuD07}.

The linearity of the embeddings allows for efficient application in sequential streaming and in distributed environments, see e.g. \cite{CW09,WoodruffZ13,KannanVW14}. The sketches can be updated in the most flexible dynamic setting, which is commonly referred to as the \emph{turnstile} model \citep{Muthukrishnan05}. In this model, think of an initial matrix of all zero values. The stream consists of updates of the form $(i,j,u)$ meaning that the entry $X_{ij}$ will be updated to $X_{ij}+u$. A single entry can be defined by one single update or by a sequence of not necessarily consecutive updates. For example a stream $S=\{\ldots,(i,j,+5),\ldots,(i,j,-3), \ldots\}$ will result in $X_{ij}=2$. Even deletions are possible in this setting by using negative updates. Clearly this also allows for additive updates of rows or columns, each consisting of consecutive single updates to all the entries in the same row or column.
At first sight this model might seem very technical and unnatural. But the usual form of storing data in a table is not appropriate or performant for massive data sets. The data is rather stored as a sequence of \emph{(key, value)} pairs in arbitrary order. For dealing with such unstructured data, the design of algorithms working in the turnstile model is of high importance. %Note, that for technical reasons the turnstile model assumes to know the value of $n$ in advance. If this is not a case, one can use a large overestimate, say $\tilde{n}\approx n^c$ for some large enough constant $c$ since the dependency on $\tilde{n}$ is only logarithmic. Another approach is to employ a \emph{doubling strategy}. Say we have a sketching matrix that works for $n \leq 2^{64}$. Now, if we encounter an index larger than this limit, we append another sketching matrix whose size is doubled to $2^{65}, 2^{66},\ldots$ until the total size is large enough to process the new index. The total number of sketch matrices needed by this approach is $O(\log n)$ which means that the space requirements grow by more than a constant, but are still polylogarithmic.

For distributed computations, note that the embedding matrices can be communicated efficiently to every machine in a computing cluster of $l$ machines. This is due to the small implicit representation by hash functions. Now, suppose the data is given as $X=\sum_{i=1}^l X^{(i)}$ where $X^{(i)}$ is stored on the machine with index $i\in [l]$. Note that by the above data representation in form of updates, $X^{(i)}$ can consist of rows, columns or single entries of $X$. Again, multiple updates to the same entry are possible and may be distributed to different machines. Every machine $i\in [l]$ can compute a small sketch on its own share $X^{(i)}$ of the data and efficiently communicate it to one dedicated central server. A sketch of the entire data set can be obtained by summing up the single sketches since $\Pi X = \sum_{i=1}^l \Pi X^{(i)}$. More details can be found in \cite{KannanVW14}. Recent implementations of similar distributed and parallel approaches for OLS are given by \cite{YangMM15}.

The above discussions make clear that our methods suit the criteria that need to be satisfied when dealing with Big Data \citep[cf.][]{welling:2014}. Namely, the number of data items that need to be accessed at a time is only a small subset of the whole data set, particularly independent of $n$. The algorithms should be amenable to distributed computing environments like MapReduce.

%%% Theory
%\input{Theory}

 \section{Theory}

In this section we introduce and develop the theoretical foundations of our approach and will combine them with existing results on ordinary least squares regression to bound the Wasserstein distance between the original likelihood function and its counterpart that is defined only on the considerably smaller sketch. Empirical evaluations supporting and complementing our theoretical results will be conducted in the subsequent section.

\subsection{Embedding the likelihood}

The following observation is standard (cf. \cite{givens1984,vempala2009}) and will be helpful in bounding the $\ell_2$ Wasserstein distance of two Gaussian measures. It allows us to derive such a bound by inspecting their means and their covariances separately.

\begin{observation}
\label{obs:meanvsvar}
Let $Z_1,Z_2\in \RL^d$ be random variables with finite first moments $m_1,m_2 < \infty$ and let $Z_1^m=Z_1-m_1$, respectively, $Z_2^m=Z_2-m_2$ be their mean-centered counterparts. Then it holds that $$\Ex{\norm{Z_1-Z_2}^2} = \norm{m_1-m_2}^2 + \Ex{\norm{Z_1^m-Z_2^m}^2}.$$
\end{observation}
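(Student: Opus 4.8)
The plan is to carry out the standard bias--variance decomposition of the squared distance, separating the deterministic offset between the means from the fluctuating mean-centered parts. First I would use the defining relation $Z_i = Z_i^m + m_i$ to rewrite the difference as
$$Z_1 - Z_2 = (Z_1^m - Z_2^m) + (m_1 - m_2).$$
Introducing the shorthand $A = Z_1^m - Z_2^m$ for the centered random part and the constant vector $b = m_1 - m_2$, this is simply $Z_1 - Z_2 = A + b$. I would then expand the squared Euclidean norm bilinearly,
$$\norm{A + b}^2 = \norm{A}^2 + 2\,A^T b + \norm{b}^2,$$
and take expectations of both sides. Since the first moments $m_1, m_2$ are assumed finite, every term here is integrable, so the expectation may be split termwise by linearity; this is the only place where the finite-moment hypothesis is actually used.

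The single substantive point is that the cross term disappears under the expectation. I would observe that $\Ex{Z_i^m} = \Ex{Z_i} - m_i = m_i - m_i = 0$ for $i \in \{1,2\}$, whence $\Ex{A} = \Ex{Z_1^m} - \Ex{Z_2^m} = 0$, so that $2\,\Ex{A}^T b = 0$ irrespective of the value of $b$. Because $b = m_1 - m_2$ is deterministic we also have $\Ex{\norm{b}^2} = \norm{m_1 - m_2}^2$. Collecting the surviving terms yields $\Ex{\norm{Z_1 - Z_2}^2} = \norm{m_1 - m_2}^2 + \Ex{\norm{Z_1^m - Z_2^m}^2}$, as claimed. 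I do not expect any genuine obstacle: the entire content of the statement is the vanishing of the inner product between the fixed mean offset and the zero-mean fluctuation once the expectation is applied, and the only bookkeeping needed is the integrability that licenses splitting the expectation over the three summands.
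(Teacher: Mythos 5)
Your proposal is correct and coincides with the paper's own proof: both expand $\norm{Z_1^m - Z_2^m + (m_1 - m_2)}^2$ bilinearly and kill the cross term via $\Ex{Z_1^m - Z_2^m} = 0$. Your write-up merely makes the integrability bookkeeping explicit, which the paper leaves implicit.
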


\begin{proof}
\begin{eqnarray*}
\Ex{\norm{Z_1-Z_2}^2}% & = & \Ex{\norm{Z_1 - m_1 + m_1 - Z_2 + m_2 - m_2}^2} \\
&=& \Ex{\norm{Z_1^m - Z_2^m + m_1 - m_2}^2} \\%[0.85\baselineskip]
&=& \Ex{\norm{Z_1^m - Z_2^m}^2 + \norm{m_1 - m_2}^2} ~ +\,2\;(m_1-m_2)^T\,\underbrace{\Ex{Z_1^m-Z_2^m}}_{=0}\\[-0.75\baselineskip]
&=& \Ex{\norm{Z_1^m - Z_2^m}^2} + \norm{m_1 - m_2}^2
\end{eqnarray*}
\end{proof}

In our first lemma we show that using an $ \varepsilon$-subspace embedding $\Pi$ for the column\-space of $[X,Y]$, we can approximate the least squares regression problem up to a factor of $1+\varepsilon$. That is, we can find a solution $\nu$ by projecting $\Pi Y$ into the column\-space of $\Pi X$ such that $\norm{X\nu-Y}\leq (1+\varepsilon) \min_{\beta\in \RL^d}\norm{X\beta-Y}$. Similar proofs can be found in \cite{fastcauchy2012} and \cite{BoutsidisDM2013}. We repeat the result here for completeness.

\begin{lemma}
\label{lem:l2regression}
Given $X\in\RL^{n\times d},Y\in\RL^{n}$, let $\Pi$ be an $(\varepsilon/3)$-sub\-space embedding for the column\-space of $[X,Y]$. Now let $\gamma=\operatorname{argmin}_{\beta\in\RL^d} \norm{X\beta-Y}^2$ and similarly define $\nu=\operatorname{argmin}_{\beta\in\RL^d} \norm{\Pi(X\beta-Y)}^2$. Then $$\norm{X\nu-Y}^2\leq (1+\varepsilon)\,\norm{X\gamma-Y}^2.$$
\end{lemma}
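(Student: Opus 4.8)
The plan is to exploit the fact that every residual vector $X\beta-Y$ lies in the column\-space of $[X,Y]$, so that the embedding $\Pi$ distorts all of their squared lengths by at most a $(1\pm\varepsilon/3)$ factor \emph{simultaneously}. Concretely, writing $X\beta-Y = [X,Y]\bigl(\begin{smallmatrix}\beta\\-1\end{smallmatrix}\bigr)$ exhibits this membership. Letting $U$ be an orthonormal basis of the column\-space of $[X,Y]$ and expressing $X\beta-Y=Uz$ in that basis, I would apply Inequality~(\ref{def:subspace1}) from Definition~\ref{def:subspace} to the coordinate vector $z$ to obtain
$$(1-\varepsilon/3)\,\norm{X\beta-Y}^2\le \norm{\Pi(X\beta-Y)}^2 \le (1+\varepsilon/3)\,\norm{X\beta-Y}^2$$
for every $\beta\in\RL^d$.

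With this two-sided bound in hand, the second step is a short optimality chain. Applying the lower bound to $\nu$, then using that $\nu$ minimizes $\norm{\Pi(X\beta-Y)}^2$ over $\beta\in\RL^d$ (so its value cannot exceed the value attained at the feasible competitor $\gamma$), and finally applying the upper bound to $\gamma$, I would get
$$\norm{X\nu-Y}^2 \le \tfrac{1}{1-\varepsilon/3}\,\norm{\Pi(X\nu-Y)}^2 \le \tfrac{1}{1-\varepsilon/3}\,\norm{\Pi(X\gamma-Y)}^2 \le \tfrac{1+\varepsilon/3}{1-\varepsilon/3}\,\norm{X\gamma-Y}^2.$$

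The last step is to verify the numerical factor $\frac{1+\varepsilon/3}{1-\varepsilon/3}\le 1+\varepsilon$, which rearranges to $0\le \frac{\varepsilon}{3}(1-\varepsilon)$ and therefore holds for all $0<\varepsilon\le 1/2$; this calculation is precisely what dictates the choice of $\varepsilon/3$ as the embedding parameter. There is no genuine obstacle here, since the argument is the textbook sketch-and-solve reduction. The only points requiring a little care are (i) justifying that the subspace-embedding guarantee, which Definition~\ref{def:subspace} states for a matrix $U$ with orthonormal columns, transfers to arbitrary residuals $X\beta-Y$ through the column\-space containment, and (ii) confirming that $\gamma$ is admissible in the sketched minimization, which is what makes the middle inequality of the chain valid.
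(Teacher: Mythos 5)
Your proposal is correct and follows essentially the same argument as the paper: the paper also expresses the residuals $X\gamma-Y$ and $X\nu-Y$ in an orthonormal basis of the columnspace of $[X,Y]$ (via its SVD), applies the $(\varepsilon/3)$-embedding property on both sides of the same optimality chain, and concludes with the bound $\frac{1+\varepsilon/3}{1-\varepsilon/3}\leq 1+\varepsilon$. The only cosmetic difference is that the paper makes the basis explicit by writing $X\gamma-Y=U\eta_1$ and $X\nu-Y=U\eta_2$ with $\eta_i=\Sigma V^T[\,\cdot\,^T,-1]^T$, which is exactly your point (i) carried out concretely.
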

\begin{proof}
Let $[X,Y]=U\Sigma V^T$ denote the SVD of $[X,Y]$. Now define $\eta_1=\Sigma V^T [\gamma^T,-1]^T$ and $\eta_2=\Sigma V^T [\nu^T,-1]^T$. Using this notation we can rewrite $U\eta_1=X\gamma-Y$ and similarly $U\eta_2=X\nu-Y$. We have that\begin{eqnarray*}
(1-\varepsilon/3)\,\norm{U\eta_2}^2 \leq \norm{\Pi U\eta_2}^2
\leq \norm{\Pi U\eta_1}^2
\leq (1+\varepsilon/3)\,\norm{U\eta_1}^2.
\end{eqnarray*}
The first and the last inequality are direct applications of the subspace embedding property (\ref{def:subspace1}), whereas the middle inequality follows from the optimality of $\nu$ in the embedded subspace. Now, by rearranging and resubstituting terms, this yields \begin{eqnarray*}
\norm{X\nu-Y}^2 \leq \left( \frac{1+\varepsilon/3}{1-\varepsilon/3} \right)\norm{X\gamma-Y}^2
\leq (1+\varepsilon)\,\norm{X\gamma-Y}^2.
\end{eqnarray*} 
\end{proof}

One can even show that a distortion of order $\sqrt{\varepsilon}$, i.e., an $O(\sqrt{\varepsilon})$-subspace embedding is already enough to get the result. This was shown by using a more complicated proof taking the geometry of the least squares solution into account and using the property that the solution is obtained by an orthogonal projection onto the column\-space spanned by the data matrix \citep[cf.][]{CW09}. Putting it the other way around, by using an $(\varepsilon/3)$-subspace embedding as in Lemma \ref{lem:l2regression}, we even have\begin{equation}
\norm{X\nu-Y}^2\leq (1+\varepsilon^2)\,\norm{X\gamma-Y}^2. \label{eqn:epssq}
\end{equation}

In the following, we investigate the distributions proportional to the likelihood functions $p\propto \mathcal{L}(\beta | X,Y)$ and $p'\propto \mathcal{L}(\beta | \Pi X,\Pi Y)$ and bound their Wasserstein distance.

We begin our contribution with a bound on the distance of their means $\gamma$ and $\nu$, respectively. We generalize upon previous results for the specific embedding methods to arbitrary $\varepsilon$-subspace embeddings.

\begin{lemma}
\label{approx:means}
%Let $X,Y,\gamma,\nu$ be defined as in Lemma \ref{lem:l2regression}.
Given $X\in\RL^{n\times d},Y\in\RL^{n}$, let $\Pi$ be an $(\varepsilon/3)$-sub\-space embedding for the column\-space of $[X,Y]$. Now let $\gamma=\operatorname{argmin}_{\beta\in\RL^d} \norm{X\beta-Y}^2$ and similarly define $\nu=\operatorname{argmin}_{\beta\in\RL^d} \norm{\Pi(X\beta-Y)}^2$. Then $$\norm{\gamma - \nu}^2 \leq \frac{\varepsilon^2}{\sigma_{\emph{min}}^{2}(X)} \, \norm{X\gamma-Y}^2.$$
\end{lemma}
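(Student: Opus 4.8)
The plan is to exploit the orthogonality of the ordinary least squares residual together with the improved regression guarantee in (\ref{eqn:epssq}). Since $\gamma$ minimizes $\norm{X\beta-Y}^2$, the residual $X\gamma-Y$ is orthogonal to the columnspace of $X$, i.e. $X^T(X\gamma-Y)=0$. Consequently, for any $\beta\in\RL^d$ the vector $X\beta-X\gamma$ lies in the columnspace of $X$ and is therefore orthogonal to $X\gamma-Y$, so the Pythagorean theorem yields the decomposition
\[
\norm{X\beta-Y}^2 = \norm{X\beta-X\gamma}^2 + \norm{X\gamma-Y}^2.
\]

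First I would instantiate this identity at $\beta=\nu$ and rearrange to isolate the quantity we care about,
\[
\norm{X(\nu-\gamma)}^2 = \norm{X\nu-Y}^2 - \norm{X\gamma-Y}^2.
\]
Next I would invoke the sharpened least squares bound (\ref{eqn:epssq}), which states $\norm{X\nu-Y}^2\leq(1+\varepsilon^2)\,\norm{X\gamma-Y}^2$ for an $(\varepsilon/3)$-subspace embedding; substituting this into the display above immediately gives
\[
\norm{X(\nu-\gamma)}^2 \leq \varepsilon^2\,\norm{X\gamma-Y}^2.
\]

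Finally I would transfer this bound from the image $X(\nu-\gamma)$ back to $\nu-\gamma$ itself using the smallest singular value of $X$. Since $X$ has full rank, the SVD gives $\norm{Xz}^2\geq\sigma_{\rm min}^2(X)\,\norm{z}^2$ for every $z\in\RL^d$; applying this with $z=\nu-\gamma$ and combining with the previous inequality yields
\[
\sigma_{\rm min}^2(X)\,\norm{\nu-\gamma}^2 \leq \norm{X(\nu-\gamma)}^2 \leq \varepsilon^2\,\norm{X\gamma-Y}^2,
\]
and dividing through by $\sigma_{\rm min}^2(X)$ gives the claim.

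The argument is essentially a chain of routine estimates, so there is no single hard obstacle; the one point that must not be overlooked is that the $\varepsilon^2$ in the target bound forces us to use the refined guarantee (\ref{eqn:epssq}) rather than the plain $(1+\varepsilon)$-bound of Lemma \ref{lem:l2regression} --- the latter would only produce a factor of $\varepsilon$ in place of $\varepsilon^2$. The orthogonal decomposition is what lets us convert the multiplicative residual guarantee into additive control on $\norm{X(\nu-\gamma)}^2$, and this is the conceptual crux linking the regression approximation to the distance between the two minimizers.
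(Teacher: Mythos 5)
Your proposal is correct and follows essentially the same route as the paper's proof: both use the orthogonality $X^T(X\gamma-Y)=0$ and the Pythagorean theorem to get $\norm{X(\nu-\gamma)}^2=\norm{X\nu-Y}^2-\norm{X\gamma-Y}^2$, invoke the refined bound (\ref{eqn:epssq}) to obtain the $\varepsilon^2$ factor, and then pass from $\norm{X(\nu-\gamma)}$ to $\norm{\nu-\gamma}$ via the smallest singular value (the paper spells this last step out through the SVD, which is needed only in the lower-rank case; under the paper's full-rank assumption your direct inequality $\norm{Xz}^2\geq\sigma_{\rm min}^2(X)\norm{z}^2$ is equivalent). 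Your closing remark that the plain $(1+\varepsilon)$ guarantee of Lemma \ref{lem:l2regression} would only yield a factor $\varepsilon$ is also exactly the reason the paper relies on (\ref{eqn:epssq}).
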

\begin{proof}
Let $X=U\Sigma V^T$ denote the SVD of $X$. Let $\eta=V^T(\gamma-\nu)$. First note that $\gamma$ and $\nu$ are both contained in the column\-space of $V$ \citep[cf.][]{Sarlos2006} which means that $V^T$ is a proper rotation with respect to $\gamma-\nu$. Thus,
\begin{eqnarray*}
\norm{X(\gamma-\nu)}^2 &=& \norm{U\Sigma V^T(\gamma-\nu)}^2
= \norm{\Sigma V^T(\gamma-\nu)}^2 \\
&=& \sum \sigma_i^2 \eta_i^2
\geq  \sum \sigma_{\text{min}}^2 \eta_i^2\\
&=& \sigma_{\text{min}}^2\,\norm{V^T(\gamma-\nu)}^2
= \sigma_{\text{min}}^2\,\norm{\gamma-\nu}^2\,.
\end{eqnarray*}

Consequently, it remains to bound $\norm{X(\gamma-\nu)}^2$. This can be done by using the fact that the minimizer $\gamma$ is obtained by projecting $Y$ orthogonally onto the column\-space of $X$. Therefore, we have $X^T(X\gamma-Y)=0$ \citep[cf.][]{CW09}. Furthermore, by Equation (\ref{eqn:epssq}) it holds that $\norm{X\nu - Y}^2 \leq (1+\varepsilon^2) \norm{X\gamma-Y}^2$. Now by plugging this into the Pythagorean theorem and rearranging we get that 
\begin{eqnarray*}
\norm{X(\gamma-\nu)}^2 &=& \norm{X\nu-Y}^2 - \norm{X\gamma-Y}^2
\leq \varepsilon^2\,\norm{X\gamma-Y}^2.
\end{eqnarray*}
Putting all together this yields the proposition
\begin{eqnarray*}
\norm{\gamma-\nu}^2 &\leq & \frac{1}{\sigma_{\text{min}}^{2}(X)}\,\norm{X(\gamma-\nu)}^2
\leq \frac{\varepsilon^2}{\sigma_{\text{min}}^{2}(X)}\,\norm{X\gamma-Y}^2.
\end{eqnarray*} 
\end{proof}

Now that we have derived a bound on the distance of the different means, recall that by Observation \ref{obs:meanvsvar}, we can assume w.l.o.g. $\gamma=\nu=0$ when we consider the variances. Namely, it remains to derive a bound on $\inf \Ex{\norm{Z_1^m-Z_2^m}^2}$, i.e., the least expected squared Euclidean distance of two points drawn from a joint distribution whose marginals are the \emph{mean-centered} original distribution and its embedded counterpart. Of course we can bound this quantity by explicitly defining a properly chosen joint distribution and bounding the expected squared distance for its particular choice. This is the idea that yields our next lemma.

\begin{lemma}
\label{approx:variance}
Let $p\propto \mathcal{L}(\beta |X,Y)$ and $p'\propto \mathcal{L}(\beta | \Pi X,\Pi Y)$. Let $Z_1^m,Z_2^m$ be the mean-centered versions of the random variables $Z_1 \sim p$ and $Z_2 \sim p'$ that are distributed according to $p$ and $p'$ respectively. Then we have
$$\inf\limits_{\lambda\in\Lambda(p,p')} \Ex[\lambda]{\norm{Z_1^m-Z_2^m}^2} \leq \varepsilon^2 \trace{(X^TX)^{-1}}.$$
\end{lemma}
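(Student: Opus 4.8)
The plan is to recognize both $p$ and $p'$ as (unnormalized) Gaussian densities in $\beta$ and to exhibit a single explicit coupling of their mean-centered versions whose transport cost already meets the claimed bound; since the statement only asks for an upper bound on $\inf_{\lambda\in\Lambda(p,p')}$, any one admissible coupling suffices. Reading $\mathcal{L}(\beta\mid X,Y)\propto\exp(-\tfrac{1}{2\varsigma^2}\norm{X\beta-Y}^2)$ as a function of $\beta$ shows that $p=N(\gamma,\varsigma^2(X^TX)^{-1})$ with $\gamma$ the least squares solution, and likewise $p'=N(\nu,\varsigma^2((\Pi X)^T\Pi X)^{-1})$. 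Hence $Z_1^m,Z_2^m$ are centered Gaussians with covariances $C_1=\varsigma^2(X^TX)^{-1}$ and $C_2=\varsigma^2((\Pi X)^T\Pi X)^{-1}$, and the task reduces to bounding the squared $\ell_2$ Wasserstein distance between two centered Gaussians.

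First I would pass to the SVD $X=U\Sigma V^T$ and introduce $R=(\Pi U)^T(\Pi U)=U^T\Pi^T\Pi U$. A direct computation gives $C_1=\varsigma^2 V\Sigma^{-2}V^T$ and $C_2=\varsigma^2 V\Sigma^{-1}R^{-1}\Sigma^{-1}V^T$, so the two covariances share the outer factors $V\Sigma^{-1}(\cdot)\Sigma^{-1}V^T$ and differ only in the inner matrix, $I_d$ versus $R^{-1}$. Since the columnspace of $X$ is contained in that of $[X,Y]$, the $(\varepsilon/3)$-subspace embedding property (\ref{def:subspace2}) transfers to $X$ and yields $\norm{R-I_d}\le\varepsilon/3$; in particular every eigenvalue of $R$ lies in $[1-\varepsilon/3,1+\varepsilon/3]$, so $R$ is positive definite and $R^{-1/2}$ is well defined and symmetric.

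The key step is the choice of coupling: draw $G\sim N(0,I_d)$ and set $Z_1^m=\varsigma V\Sigma^{-1}G$ together with $Z_2^m=\varsigma V\Sigma^{-1}R^{-1/2}G$. One checks immediately that these have marginal covariances $C_1$ and $C_2$, so the pair is an admissible $\lambda\in\Lambda(p,p')$. Routing the \emph{same} $G$ through the matched factor $V\Sigma^{-1}$ is exactly what makes the difference collapse to $Z_1^m-Z_2^m=\varsigma V\Sigma^{-1}(I_d-R^{-1/2})G$, thereby avoiding the non-commutativity of $C_1^{1/2}$ and $C_2^{1/2}$ that would spoil the naive coupling $C_i^{1/2}G$. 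Then, by the identity $\Ex{\norm{BG}^2}=\trace{BB^T}$ applied with $B=\varsigma V\Sigma^{-1}(I_d-R^{-1/2})$, we obtain $\Ex[\lambda]{\norm{Z_1^m-Z_2^m}^2}=\varsigma^2\trace{\Sigma^{-2}(I_d-R^{-1/2})^2}$.

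Finally I would control the perturbation factor spectrally. Each eigenvalue of $I_d-R^{-1/2}$ equals $1-\rho^{-1/2}$ for some eigenvalue $\rho\in[1-\varepsilon/3,1+\varepsilon/3]$ of $R$, and the elementary estimates $(1+t)^{-1/2}\ge 1-t/2$ and $(1-t)^{-1/2}\le 1+t$ (the latter for $t=\varepsilon/3\le1/6$) give $|1-\rho^{-1/2}|\le\varepsilon/3$, so $\lambda_{\max}((I_d-R^{-1/2})^2)\le\varepsilon^2$. Using $\trace{MN}\le\norm{M}\trace{N}$ for symmetric positive semidefinite $M,N$ with $M=(I_d-R^{-1/2})^2$ and $N=\Sigma^{-2}$ then bounds the trace by $\varepsilon^2\trace{\Sigma^{-2}}=\varepsilon^2\trace{(X^TX)^{-1}}$, which matches the claim (the prefactor $\varsigma^2$ reduces to one under the normalization $\varsigma^2=1$ of the error variance used here). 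I expect the two genuinely delicate points to be (i) spotting the coupling that aligns the $\Sigma^{-1}$ factors so that non-commutativity never enters, and (ii) verifying the scalar inequality that converts the spectral closeness $\norm{R-I_d}\le\varepsilon/3$ into the clean bound $\varepsilon$ on the spectrum of $I_d-R^{-1/2}$.
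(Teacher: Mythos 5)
Your proof is correct. It shares the paper's high-level strategy---since only an upper bound on the infimum is required, exhibit one explicit deterministic coupling of the two centered Gaussian laws and bound its cost---but the execution is genuinely different. The paper's coupling is geometric: writing $X=U\Sigma V^T$ and $\Pi X=\tilde U\tilde\Sigma\tilde V^T$, it pairs points of the two covariance ellipsoids that lie on a common ray through the origin, and uses Observation \ref{obs:singular} to confine the ray-scaling factor $\tau$ to $[1-\varepsilon,1+\varepsilon]$, yielding the pointwise contraction $\norm{Z_1^m-Z_2^m}\leq\varepsilon\norm{Z_1^m}$ and then the trace bound. Your coupling instead routes a single standard Gaussian $G$ through the matched factors $\varsigma V\Sigma^{-1}$ and $\varsigma V\Sigma^{-1}R^{-1/2}$ with $R=U^T\Pi^T\Pi U$, and converts the embedding property $\norm{R-I_d}\leq\varepsilon/3$ from (\ref{def:subspace2}) into a spectral bound on $I_d-R^{-1/2}$, finishing with $\trace{MN}\leq\norm{M}\trace{N}$ for positive semidefinite $M,N$. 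Each route buys something. In your version, the fact that the coupling has the correct marginals is a one-line covariance computation, whereas the paper's ray bijection is only asserted to transport the first centered law onto the second (for Gaussian densities this needs a verification that the paper leaves implicit); your argument also gives the stronger constant $(\varepsilon/3)^2$ in place of $\varepsilon^2$, and it sidesteps the slack in passing from $\tau\in[1-\varepsilon,1+\varepsilon]$ to a factor-$\varepsilon$ displacement. The paper's version, on the other hand, keeps the $(1\pm\varepsilon)$ distortion visible geometrically and needs neither matrix square roots nor trace inequalities. Two contextual remarks: the lemma as stated leaves the hypothesis on $\Pi$ implicit, and your choice of an $(\varepsilon/3)$-subspace embedding (as in Lemma \ref{thm:likelihood}) is consistent with the surrounding results---with only an $\varepsilon$-embedding your argument still delivers the claimed $\varepsilon^2$ bound, since $|1-\rho^{-1/2}|\leq\varepsilon$ continues to hold for eigenvalues $\rho\in[1-\varepsilon,1+\varepsilon]$ when $\varepsilon\leq 1/2$; and your treatment of $\varsigma^2$ agrees with the paper's implicit normalization, which likewise takes the covariance of $p$ to be $(X^TX)^{-1}$.
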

\begin{proof}
Our plan is to design a joint distribution that deterministically maps points from one distribution to another in such a way that we can bound the distance of every pair of points. This can be done by utilizing the Dirac delta function $\delta(\cdot)$, which is a degenerate probability density function that concentrates all probability mass at zero and has zero density otherwise. Given a bijection $g: \RL^d \rightarrow \RL^d$ we can define such a joint distribution $\lambda\in \Lambda(p,p')$ through its conditional distributions $\lambda(x\,|\,y)=\delta(x-g(y))$ for every $y\in \RL^d$. It therefore remains to define $g$.

According to Observation \ref{obs:singular}, when applying the embedding $\Pi$, the column\-space of a matrix is expanded or contracted, respectively, by a factor of at most $(1\pm\varepsilon)$. We will make use of this fact in the following way. Let $X=U\Sigma V^T$ and $\Pi X=\tilde{U}\tilde{\Sigma}\tilde{V}^T$ denote the SVDs of $X$ and $\Pi X$, respectively. Now, to define the $x$-$y$-pairs that will be mapped to each other by $g$, we consider vectors $x,x',y,y' \in \RL^d$ where $x'$ and $y'$ are contained in the columnspaces of $V$ and $\tilde{V}$, respectively. To obtain the bijection $g$, let the vectors have the following properties for arbitrary, but fixed radius $c \geq 0$:
\begin{enumerate}
	\item $\norm{x'}=\norm{y'}=c$
	\item $x=\Sigma V^Tx'$
	\item $y=\tilde{\Sigma} \tilde{V}^Ty'$
	\item $\exists \tau > 0: x=\tau y$.
\end{enumerate}
Observe that by the first property $x'$ and $y'$ lie on a $d$-di\-men\-sion\-al sphere with radius $c$ centered at $0$. Therefore, there exists a rotation matrix $R\in\RL^{d\times d}$ such that $y'=Rx'$. Note that such a map is bijective by definition. The second item defines a map of such spheres to ellipsoids (also centered at $0$) given by $\Sigma V^T$. Recall that $x'$ was chosen from the column\-space of $V$. Thus, this map can be seen as bijection between the $d$-dimensional vector space and a $d$-dimensional subspace contained in $n$-dimensional space. The third property is defined analogously. The fourth property urges that $x$ and $y$ both lie on a ray emanating from $0$. Note that any such ray intersects each ellipsoid exactly once.

Our bijection can be defined accordingly as 
\begin{eqnarray*}
g: \RL^d &\rightarrow & \RL^d\\
x \;\; &\mapsto & \tilde{\Sigma} \tilde{V}^TRV\Sigma^{-1} x
\end{eqnarray*}
by composing the map $\Sigma V^T$, defined in the second item, with the rotation $R$ and finally with $\tilde{\Sigma} \tilde{V}^T$ from the third property. The map is bijective since it is obtained as the composition of bijections.

Now, in order to bound the distance $\norm{Z_1^m-Z_2^m}^2$ for any realization of $(Z_1^m,Z_2^m)$ according to their joint distribution defined above, we can derive a bound on the parameter $\tau$. Substituting the second and third properties into the fourth, we get that
\begin{eqnarray*}
\Sigma V^Tx' = \tau \tilde{\Sigma} \tilde{V}^Ty'
\end{eqnarray*}
which can be rearranged to
\begin{eqnarray*}
y'^Ty'\tau &=& (y'^T \tilde{V}) \tilde{\Sigma}^{-1} \Sigma (V^T x')\\
  ~  &=& \sum (y'^T \tilde{V})_i (V^Tx')_i \frac{\sigma_i}{\tilde{\sigma}_i}\\
  ~&\leq & \sum (y'^T \tilde{V})_i (V^Tx')_i \frac{\sigma_i}{\sigma_i\sqrt{1-\varepsilon}}\\
  ~&\leq & (1+\varepsilon)\sum (y'^T \tilde{V})_i (V^Tx')_i\\
  ~&\leq & (1+\varepsilon)\,c^2.
\end{eqnarray*}
The first inequality follows from $\tilde{\sigma}_i \geq \sqrt{1-\varepsilon} \; \sigma_i$ and the second from the assumption $\varepsilon \leq 1/2$. This eventually means that $\tau \leq (1+\varepsilon)$ since $y'^Ty'=c^2$ by the first property.

A lower bound of $\tau \geq (1-\varepsilon)$ can be derived analogously by using $\tilde{\sigma}_i \leq \sqrt{1+\varepsilon} \; \sigma_i$.

Now we can conclude our proof. It follows that
\begin{eqnarray*}
\inf\limits_{\lambda'\in \Lambda(p,p')} \Ex[\lambda']{\norm{Z_1^m-Z_2^m}^2} &\leq & \Ex[\lambda]{\norm{Z_1^m-Z_2^m}^2}
\leq \Ex[\lambda]{\norm{\varepsilon Z_1^m}^2}\\
~ &=& \varepsilon^2 \; \Ex[\lambda]{\norm{Z_1^m}^2}
  = \varepsilon^2 \, \trace{(X^TX)^{-1}}.
\end{eqnarray*}
The last equality holds since the expected squared norm of the mean-centered random variable is just the trace of its covariance matrix. 
\end{proof}
Combining the above results we get the following lemma.

\begin{lemma}
\label{thm:likelihood}
Let $\Pi$ be an $(\varepsilon/3)$-subspace embedding for the column\-space of $X$. Let $p\propto \mathcal{L}(\beta |X,Y)$ and $p'\propto \mathcal{L}(\beta | \Pi X,\Pi Y)$. Then
$$
\mathcal W_2^2(p,p') \leq \frac{\varepsilon^2}{\sigma_{\emph{min}}^{2}(X)} \, \norm{X\gamma-Y}^2 + \varepsilon^2 \trace{(X^TX)^{-1}}.
$$
\end{lemma}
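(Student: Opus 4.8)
The plan is to combine the three preceding results, which together already do all the heavy lifting. First I would observe that both $p$ and $p'$ are Gaussian distributions in $\beta$ whose means are exactly the two least-squares minimizers that appear in Lemmas \ref{approx:means} and \ref{approx:variance}. Indeed, since $\mathcal{L}(\beta\mid X,Y)\propto\exp(-\frac{1}{2\varsigma^2}\norm{X\beta-Y}^2)$ and the normal equations give $X^T(X\gamma-Y)=0$, the quadratic form completes as $\norm{X\beta-Y}^2=(\beta-\gamma)^TX^TX(\beta-\gamma)+\norm{X\gamma-Y}^2$, so $p$ is Gaussian with mean $\gamma=\operatorname{argmin}_\beta\norm{X\beta-Y}^2$; an identical computation shows $p'$ has mean $\nu=\operatorname{argmin}_\beta\norm{\Pi(X\beta-Y)}^2$. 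Thus the means $m_1,m_2$ in Observation \ref{obs:meanvsvar} are precisely $\gamma$ and $\nu$.

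Next I would expand the definition of the Wasserstein distance, writing $\mathcal{W}_2^2(p,p')=\inf_{\lambda\in\Lambda(p,p')}\Ex[\lambda]{\norm{Z_1-Z_2}^2}$, and apply Observation \ref{obs:meanvsvar} to the integrand for every fixed coupling $\lambda$. This splits the expectation into $\norm{\gamma-\nu}^2+\Ex[\lambda]{\norm{Z_1^m-Z_2^m}^2}$. The crucial point is that the summand $\norm{\gamma-\nu}^2$ depends only on the marginals $p,p'$ and not on the choice of coupling $\lambda$, so it is constant over the infimum and may be pulled outside. Hence $\mathcal{W}_2^2(p,p')=\norm{\gamma-\nu}^2+\inf_{\lambda\in\Lambda(p,p')}\Ex[\lambda]{\norm{Z_1^m-Z_2^m}^2}$.

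Finally I would bound the two terms separately. The mean term is controlled directly by Lemma \ref{approx:means}, which gives $\norm{\gamma-\nu}^2\leq\frac{\varepsilon^2}{\sigma_{\min}^{2}(X)}\norm{X\gamma-Y}^2$, and the centered term is exactly the quantity bounded in Lemma \ref{approx:variance} by $\varepsilon^2\trace{(X^TX)^{-1}}$. Adding these two bounds yields the claimed inequality.

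The argument is short precisely because the real work was carried out in the three prior lemmas; the only step requiring genuine care is the interchange of the infimum with the mean-distance summand, which is valid because the means are determined by the marginals alone and therefore survive every coupling unchanged. A secondary point worth making explicit is the identification of $\gamma$ and $\nu$ as the actual distribution means in the first step, since this is what allows Observation \ref{obs:meanvsvar} to be applied with $m_1=\gamma$ and $m_2=\nu$; this is the only place where the Gaussian structure of the likelihood is used.
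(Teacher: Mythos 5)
Your proof is correct and follows essentially the same route as the paper, which simply cites Definition \ref{def:wassdist}, Observation \ref{obs:meanvsvar}, Lemma \ref{approx:means} and Lemma \ref{approx:variance} in one line. The details you make explicit—identifying $\gamma$ and $\nu$ as the Gaussian means via completing the square, and pulling the coupling-independent term $\norm{\gamma-\nu}^2$ outside the infimum—are exactly the steps the paper leaves implicit.
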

\begin{proof}
The lemma follows from Definition \ref{def:wassdist}, Observation \ref{obs:meanvsvar}, Lemma \ref{approx:means} and Lemma \ref{approx:variance}. 
\end{proof}

Under mild assumptions we can argue that this leads to a $(1+O(\varepsilon))$-approximation of the likelihood with respect to the Wasserstein weight (see Definition \ref{def:wassweight}).

\begin{corollary}
\label{corr:wassweight}
Let $\Pi$ be an $(\varepsilon/3)$-subspace embedding for the column\-space of $X$. Let $p\propto \mathcal{L}(\beta |X,Y)$ and similarly let $p'\propto \mathcal{L}(\beta | \Pi X,\Pi Y)$. Let $\kappa(X)=\sigma_{\max}(X)/\sigma_{\min}(X)$ be the condition number of $X$. Assume that for some $\rho\in(0,1]$ we have $\norm{X\gamma} \geq \rho\norm{Y}$. Then $$\mathcal W_2(p') \leq \left(1+\frac{\kappa(X)}{\rho} \,{\varepsilon} \right)\;\mathcal W_2(p).$$
\end{corollary}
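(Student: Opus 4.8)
The plan is to treat the Wasserstein weight as the Wasserstein distance to the Dirac measure and exploit that $\mathcal{W}_2$ is a genuine metric. Since all the measures involved ($p$, $p'$ and $\delta$) are Gaussians or Dirac and hence have finite second moments, the triangle inequality applies and gives
$$\mathcal{W}_2(p') = \mathcal{W}_2(p',\delta) \leq \mathcal{W}_2(p,\delta) + \mathcal{W}_2(p,p') = \mathcal{W}_2(p) + \mathcal{W}_2(p,p').$$
Thus it suffices to prove $\mathcal{W}_2(p,p') \leq \tfrac{\kappa(X)}{\rho}\,\varepsilon\,\mathcal{W}_2(p)$. Lemma~\ref{thm:likelihood} already bounds $\mathcal{W}_2^2(p,p')$ by the sum of a mean term $\tfrac{\varepsilon^2}{\sigma_{\min}^2(X)}\norm{X\gamma-Y}^2$ and a variance term $\varepsilon^2\trace{(X^TX)^{-1}}$, so the entire task is to dominate both summands by $\tfrac{\kappa^2(X)}{\rho^2}\varepsilon^2\,\mathcal{W}_2^2(p)$.

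First I would record what $\mathcal{W}_2(p)$ is. As $p\propto\mathcal{L}(\beta|X,Y)$ is the Gaussian with mean $\gamma$ and covariance $(X^TX)^{-1}$, its weight satisfies $\mathcal{W}_2^2(p)=\Ex[p]{\norm{\beta}^2}=\norm{\gamma}^2+\trace{(X^TX)^{-1}}$, the second equality decomposing the expected squared norm into the squared mean plus the trace of the covariance (exactly the identity used already in Lemma~\ref{approx:variance}). This disposes of the variance term at once, since $\varepsilon^2\trace{(X^TX)^{-1}}\leq\varepsilon^2\,\mathcal{W}_2^2(p)$.

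The substantive step is the mean term, and this is where the assumption $\norm{X\gamma}\geq\rho\norm{Y}$ enters. I would use that $\gamma$ realizes the orthogonal projection of $Y$ onto the columnspace of $X$, so that by the Pythagorean theorem $\norm{X\gamma-Y}^2=\norm{Y}^2-\norm{X\gamma}^2\leq\norm{Y}^2$. The assumption then converts this into $\norm{X\gamma-Y}^2\leq\rho^{-2}\norm{X\gamma}^2$, and passing to the SVD $X=U\Sigma V^T$ gives $\norm{X\gamma}^2\leq\sigma_{\max}^2(X)\norm{\gamma}^2$. Dividing by $\sigma_{\min}^2(X)$ produces precisely the squared condition number:
$$\frac{\varepsilon^2}{\sigma_{\min}^2(X)}\norm{X\gamma-Y}^2 \leq \frac{\varepsilon^2\sigma_{\max}^2(X)}{\rho^2\,\sigma_{\min}^2(X)}\norm{\gamma}^2 = \frac{\kappa^2(X)}{\rho^2}\,\varepsilon^2\norm{\gamma}^2.$$

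Finally I would assemble the pieces. Because $\kappa(X)\geq 1$ and $\rho\leq 1$ force $\kappa^2(X)/\rho^2\geq 1$, both summands are dominated by $\tfrac{\kappa^2(X)}{\rho^2}\varepsilon^2$ multiplied by $\norm{\gamma}^2$ and $\trace{(X^TX)^{-1}}$ respectively, whose sum is exactly $\mathcal{W}_2^2(p)$; taking square roots yields $\mathcal{W}_2(p,p')\leq\tfrac{\kappa(X)}{\rho}\varepsilon\,\mathcal{W}_2(p)$, and substitution into the triangle inequality gives the claim. I expect the mean term to be the only delicate point: the raw residual $\norm{X\gamma-Y}$ is not controlled by $\mathcal{W}_2(p)$ on its own, and the hypothesis $\norm{X\gamma}\geq\rho\norm{Y}$ is exactly the lower bound on the explained signal needed to rewrite the residual in terms of $\norm{\gamma}$; without it the residual could dominate and no relative guarantee of this form would hold.
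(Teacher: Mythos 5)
Your proof is correct, but it follows a genuinely different route from the paper's. The paper never forms the distance $\mathcal W_2(p,p')$ at all: it writes both weights explicitly as $\mathcal W_2^2(p)=\norm{\gamma}^2+\trace{(X^TX)^{-1}}$ and $\mathcal W_2^2(p')=\norm{\nu}^2+\trace{(X^T\Pi^T\Pi X)^{-1}}$, bounds the sketched trace by $(1+\varepsilon)\trace{(X^TX)^{-1}}$ using Observation \ref{obs:singular}, and bounds the sketched mean by the triangle inequality in $\RL^d$, $\norm{\nu}\leq\norm{\gamma}+\norm{\nu-\gamma}$, where $\norm{\nu-\gamma}$ is controlled by Lemma \ref{approx:means} together with the same Pythagorean-plus-$\rho$ manipulation you use; squaring and using $\kappa(X)/\rho\geq 1$ finishes. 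You instead invoke the metric structure of $\mathcal W_2$ to write $\mathcal W_2(p')\leq \mathcal W_2(p)+\mathcal W_2(p,p')$ and then dominate the ready-made bound of Lemma \ref{thm:likelihood} by $\frac{\kappa(X)}{\rho}\,\varepsilon\,\mathcal W_2(p)$, which is legitimate since all measures involved have finite second moments. Your approach is more modular: it consumes Lemma \ref{thm:likelihood} as a black box, needs neither Observation \ref{obs:singular} nor any knowledge of the covariance (or even the Gaussianity) of $p'$ beyond the distance bound, and would survive unchanged under any future sharpening of Lemma \ref{thm:likelihood}. What the paper's parameter-level argument buys in exchange is the sharper intermediate statements $\norm{\nu}\leq(1+\frac{\kappa(X)}{\rho}\varepsilon)\norm{\gamma}$ and $\trace{(X^T\Pi^T\Pi X)^{-1}}\leq(1+\varepsilon)\trace{(X^TX)^{-1}}$, i.e., direct accuracy guarantees for the sketched mean and covariance as sufficient statistics, which is the interpretation the paper emphasizes after Theorem \ref{thm:posterior}. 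Both arguments inherit the paper's own small hypothesis mismatch (Lemma \ref{approx:means} is stated for an embedding of the columnspace of $[X,Y]$, while the corollary assumes one for $X$), so your proposal introduces no new gap there.
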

\begin{proof}
By definition, the squared $\ell_2$ Wasserstein weight of $p$ equals its second moment. Since $p$ is a Gaussian measure with mean $\gamma$ and covariance matrix $(X^TX)^{-1}$, we thus have\begin{eqnarray*} \mathcal W_2^2(p)=\norm{\gamma}^2 + \trace{(X^TX)^{-1}}
\end{eqnarray*}
and similarly
\begin{eqnarray*}
\mathcal W_2^2(p')=\norm{\nu}^2 + \trace{(X^T\Pi^T\Pi X)^{-1}}.
\end{eqnarray*}

Since $\Pi$ is an $(\varepsilon/3)$-subspace embedding for the column\-space of $X$ we know from Observation \ref{obs:singular}, that all the squared singular values of $X$ are approximated up to less than $(1\pm\varepsilon)$ error and so are their inverses. Therefore, we have that \begin{eqnarray}\trace{(X^T\Pi^T\Pi X)^{-1}}\leq (1+\varepsilon)\, \trace{(X^TX)^{-1}}.\label{eq:trace}\end{eqnarray}
It remains to bound $\norm{\nu}^2$. To this end we use the assumption that for some $\rho\in(0,1]$ we have $\norm{X\gamma} \geq \rho\norm{Y}$. By $X^T(X\gamma-Y)=0$ and applying the Pythagorean Theorem this means that
\begin{eqnarray}
\norm{X\gamma-Y}^2 &=& \norm{Y}^2 - \norm{X\gamma}^2 \notag
\leq \norm{X\gamma}^2 \left(\frac{1}{\rho^2}-1\right) \notag
\leq \frac{\norm{X\gamma}^2}{\rho^2}.\label{ineq:assumption}
\end{eqnarray}

Now we can apply the triangle inequality, Lemma \ref{approx:means}, Inequality (\ref{ineq:assumption}) and Definition \ref{def:spectralnorm} to get
\begin{eqnarray*}
\norm{\nu} &\leq & \norm{\gamma} + \norm{\nu-\gamma}\\
&\leq & \norm{\gamma} + \frac{\varepsilon}{\sigma_{\text{min}}(X)} \, \norm{X\gamma-Y}\\
&\leq & \norm{\gamma} + \frac{\varepsilon}{\rho\sigma_{\text{min}}(X)} \, \norm{X\gamma}\\
&\leq & \norm{\gamma} + \frac{\varepsilon}{\rho\sigma_{\text{min}}(X)} \, \norm{X}\norm{\gamma}\\
&=& \norm{\gamma} + \frac{\varepsilon}{\rho} \, \kappa(X)\norm{\gamma}\\
&=& \left(1+\frac{\kappa(X)}{\rho} \,\varepsilon \right)\norm{\gamma}.
\end{eqnarray*}
Combining this with Inequality (\ref{eq:trace}), the claim follows since $\frac{\kappa(X)}{\rho}\geq 1$ and therefore $(1+\varepsilon)\leq (1+\frac{\kappa(X)}{\rho}\varepsilon)^2$ and finally taking square roots on both sides. 
\end{proof}

We stress that the assumption that there exists some constant $\rho\in(0,1]$ such that $\norm{X\gamma} \geq \rho\norm{Y}$ is very natural and mild in the setting of linear regression since it means that at least a constant fraction of the dependent variable $Y$ can be explained within the column\-space of the data $X$ \citep[cf.][]{DrineasMM06}. If this is not true, then a linear model is not appropriate at all for the given data.

\subsection{Bayesian regression}

So far we have shown that using subspace embeddings to compress a given data set for regression yields a good approximation to the likelihood. Note that in a Bayesian regression setting Lemma \ref{thm:likelihood} already implies a similar approximation error for the posterior distribution if the priors for $\beta$ are chosen to be uniform distributions over $\mathds R$. This is an improper, non-informative choice, $p_{\rm pre}(\beta) = \mathds{1}_{\mathds{R}^d}$. From this, it follows that
\begin{eqnarray*}
p_{\rm post}(\beta|X, Y) & \propto & \mathcal L(\beta | X,Y)\cdot \mathds{1}_{\mathds{R}^d}\\
 & = & \mathcal L(\beta | X,Y).
\end{eqnarray*}
 The remaining term is just the Gaussian likelihood which is proper. For regression models, especially on data sets with large $n$, this covers a considerable amount of the cases of interest \citep[cf.][]{Gelman2014}. We will extend this to arbitrary Gaussian priors $p_{\rm pre}(\beta)$ leading to our main result: an approximation guarantee for Gaussian Bayesian regression in its most general form.

To this end, let $m$ be the mean of the prior distribution and let $S$ be derived from its covariance matrix by $\Sigma=\varsigma^2(S^TS)^{-1}$. Now, the posterior distribution is given by 
\begin{eqnarray*}
p_{\rm post}(\beta|X, Y) &\propto & \mathcal L(\beta | X,Y)\cdot p_{\rm pre}(\beta)\\
~ &=& \frac{1}{(2\pi\varsigma^2)^{n/2}} \cdot \exp\left(-\frac{1}{2\varsigma^2}\norm{X\beta-Y}^2\right)
\cdot \frac{1}{(2\pi)^{\frac{d}{2}}|\Sigma |^{\frac{1}{2}}} \cdot \exp\left(-\frac{1}{2\varsigma^2}\norm{S(\beta-m)}^2\right).
\end{eqnarray*} 
Thus, we know that up to some constants that are independent of $\beta$, the exponent of the posterior can be described by\begin{equation}
\norm{X\beta-Y}^2 + \norm{S(\beta-m)}^2 \label{eq:above}
\end{equation}
which contains all the information to define the mean and covariance structure of the posterior distribution.
Now let $$Z=\left[\begin{matrix}X\\S\end{matrix}\right] \quad\text{and}\quad z=\left[\begin{matrix}Y\\Sm\end{matrix}\right].$$
With these definitions we can rewrite Equation (\ref{eq:above}) above as $\norm{Z\beta-z}^2$. This, in turn, can be treated as a (frequentist) regression problem to which we can apply Lemma \ref{thm:likelihood}. We just have to use a subspace embedding for the column\-space of $[Z,z]$ instead of only embedding $[X,Y]$. We will see that it is not necessary to do this explicitly. More precisely, embedding only the data matrix is sufficient to have a subspace embedding for the entire column\-space defined by the data and the prior information and, therefore, to have a proper approximation of the posterior distribution. This is formalized in the following lemma.

\begin{lemma}
\label{lem:onlydata}
Let $M=[M_1^T,M_2^T]^T\in\RL^{(n_1+n_2)\times d}$ be an arbitrary matrix. Suppose $\Pi$ is an $\varepsilon$-subspace embedding for the column\-space of $M_1$. Let $I_{n_2}\in\RL^{(n_2\times n_2)}$ be the identity matrix.  Then 
\[
P=\left[\begin{matrix}\Pi & 0\\0 & I_{n_2}\end{matrix}\right]\in\RL^{(k+n_2)\times (n_1+n_2)}
\] is an $\varepsilon$-subspace embedding for the column\-space of $M$.
\end{lemma}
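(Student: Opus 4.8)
The plan is to verify the defining inequality (\ref{def:subspace1}) directly for $M$ by exploiting the block-diagonal structure of $P$, since there is no need to construct an orthonormal basis for the columnspace of $M$ explicitly. First I would observe that for any $x\in\RL^d$ the stacked structure gives
\[
Mx=\left[\begin{matrix}M_1 x\\ M_2 x\end{matrix}\right],\qquad PMx=\left[\begin{matrix}\Pi M_1 x\\ M_2 x\end{matrix}\right],
\]
so that $\norm{Mx}^2=\norm{M_1 x}^2+\norm{M_2 x}^2$ and $\norm{PMx}^2=\norm{\Pi M_1 x}^2+\norm{M_2 x}^2$. The crucial structural point is that the identity block $I_{n_2}$ leaves the $M_2$-contribution completely untouched, while only the $M_1$-contribution is acted upon by $\Pi$.

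Next I would invoke the hypothesis that $\Pi$ is an $\varepsilon$-subspace embedding for the columnspace of $M_1$. Since every vector of the form $M_1 x$ lies in that columnspace, the embedding property (\ref{def:subspace1}) yields
\[
(1-\varepsilon)\norm{M_1 x}^2\leq \norm{\Pi M_1 x}^2\leq (1+\varepsilon)\norm{M_1 x}^2 .
\]
Adding the common term $\norm{M_2 x}^2$ to each side and using $(1-\varepsilon)\leq 1\leq (1+\varepsilon)$ to absorb it into the scaling factor, I obtain
\[
(1-\varepsilon)\norm{Mx}^2\leq \norm{PMx}^2\leq (1+\varepsilon)\norm{Mx}^2 ,
\]
which is precisely inequality (\ref{def:subspace1}) for $M$ and therefore establishes that $P$ is an $\varepsilon$-subspace embedding for the columnspace of $M$.

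There is essentially no hard step here; the argument reduces to a one-line block computation followed by adding a non-negative quantity. The only place calling for a little care is the translation between the two formulations of Definition \ref{def:subspace}: the embedding property is phrased for an orthonormal basis $U$ of the columnspace, so I would note that writing $M_1=U_1\Sigma_1 V_1^T$ shows $M_1 x = U_1(\Sigma_1 V_1^T x)$ lies in the range of $U_1$, which is exactly why the guarantee transfers to arbitrary vectors $M_1 x$. Once this is in place, the exactly-preserved $M_2$-block merely tightens both bounds toward the true value $\norm{Mx}^2$, so no degradation of the distortion parameter occurs and the same $\varepsilon$ carries over from $M_1$ to the full matrix $M$.
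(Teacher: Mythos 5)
Your proof is correct and takes essentially the same route as the paper's: fix $x$, use the block structure to write $\norm{PMx}^2=\norm{\Pi M_1x}^2+\norm{M_2x}^2$, apply the embedding guarantee to the $M_1$-block, and absorb the untouched $\norm{M_2x}^2$ term (the paper phrases this as $|\norm{PMx}^2-\norm{Mx}^2|\leq\varepsilon\norm{M_1x}^2\leq\varepsilon\norm{Mx}^2$, which is equivalent to your sandwich bound). Your closing remark on translating between the orthonormal-basis formulation and vectors of the form $M_1x$ (resp. $Mx$) matches the paper's appeal to the SVD $M=U\Sigma V^T$ and the surjectivity of $\Sigma V^T$.
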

\begin{proof}
Fix an arbitrary $x\in\RL^d$. We have
\begin{eqnarray*}
|\norm{PMx}^2-\norm{Mx}^2|
&=& |\norm{\Pi M_1x}^2+\norm{M_2x}^2-\norm{M_1x}^2-\norm{M_2x}^2|\\
&=& |\norm{\Pi M_1x}^2-\norm{M_1x}^2|\\
&\leq & \varepsilon \norm{M_1x}^2\\
&\leq & \varepsilon (\norm{M_1x}^2 + \norm{M_2x}^2)\\
&=& \varepsilon \norm{Mx}^2
\end{eqnarray*}
which concludes the proof by singular value decomposition $M=U\Sigma V^T$ and surjectivity of the linear map $\Sigma V^T$. 
\end{proof}

This lemma finally enables us to prove our main theoretical result.

\begin{theorem}
\label{thm:posterior}
Let $\Pi$ be an $(\varepsilon/3)$-subspace embedding for the column\-space of $X$. Let $p_{\rm pre}(\beta)$ be an arbitrary Gaussian distribution with mean $m$ and covariance matrix $\Sigma=\varsigma^2(S^TS)^{-1}$. Let $$Z=\left[\begin{matrix}X\\S\end{matrix}\right] \quad\text{and}\quad z=\left[\begin{matrix}Y\\Sm\end{matrix}\right].$$ Let $\mu = \operatorname{argmin}_{\beta\in\RL^d} \norm{Z\beta-z}$ be the posterior mean. Let $p\propto \mathcal{L}(\beta |X,Y)\cdot p_{\rm pre}(\beta)$ and $p'\propto \mathcal{L}(\beta | \Pi X,\Pi Y)\cdot p_{\rm pre}(\beta)$. Then
$$
\mathcal W_2^2(p,p') \leq \frac{\varepsilon^2}{\sigma_{\emph{min}}^{2}(Z)} \, \norm{Z\mu-z}^2 + \varepsilon^2 \trace{(Z^TZ)^{-1}}.
$$
\end{theorem}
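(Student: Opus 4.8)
The plan is to reduce the Bayesian statement to the likelihood bound of Lemma~\ref{thm:likelihood} by recasting the posterior as an ordinary least-squares problem on the augmented data $(Z,z)$ and then sketching that problem with a block-diagonal embedding. As noted in the discussion preceding the theorem, the posterior exponent equals, up to $\beta$-independent constants, $\norm{X\beta-Y}^2+\norm{S(\beta-m)}^2=\norm{Z\beta-z}^2$, so that $p\propto\mathcal L(\beta\mid Z,z)$ is exactly the likelihood of the augmented regression, with $\mu$ as its minimizer (the posterior mean).

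First I would put $p'$ into the same form. Since $p'\propto\mathcal L(\beta\mid \Pi X,\Pi Y)\cdot p_{\rm pre}(\beta)$, its exponent is $\norm{\Pi X\beta-\Pi Y}^2+\norm{S(\beta-m)}^2$. Defining the block matrix $P=\left[\begin{smallmatrix}\Pi & 0\\ 0 & I\end{smallmatrix}\right]$ and using $PZ=\left[\begin{smallmatrix}\Pi X\\ S\end{smallmatrix}\right]$, $Pz=\left[\begin{smallmatrix}\Pi Y\\ Sm\end{smallmatrix}\right]$, this exponent equals $\norm{PZ\beta-Pz}^2$. Hence $p'\propto\mathcal L(\beta\mid PZ,Pz)$ is precisely the sketch of the augmented regression obtained by applying $P$, which is the key structural observation of the argument.

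Next I would lift the embedding property from $\Pi$ to $P$. Applying Lemma~\ref{lem:onlydata} with top block $M_1=X$ and bottom block $M_2=S$ (so that $M=Z$), the hypothesis that $\Pi$ is an $(\varepsilon/3)$-subspace embedding for the column\-space of $X$ immediately yields that $P$ is an $(\varepsilon/3)$-subspace embedding for the column\-space of $Z$. Consequently the triple $(Z,z,P)$ satisfies the hypotheses of Lemma~\ref{thm:likelihood} verbatim.

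Finally I would invoke Lemma~\ref{thm:likelihood} with $(X,Y,\Pi,\gamma)$ replaced by $(Z,z,P,\mu)$, which delivers $\mathcal W_2^2(p,p')\leq \frac{\varepsilon^2}{\sigma_{\min}^2(Z)}\,\norm{Z\mu-z}^2+\varepsilon^2\,\trace{(Z^TZ)^{-1}}$, exactly the asserted bound. I expect the only nontrivial point to be the third step: one must be certain that sketching the data block $X$ alone, while leaving the prior block $S$ untouched through the identity, still produces a genuine subspace embedding for the \emph{joint} system governing the posterior. This is precisely what Lemma~\ref{lem:onlydata} certifies, and it is what makes it unnecessary to apply $\Pi$ to, or even to form, the prior contributions $S$ and $Sm$.
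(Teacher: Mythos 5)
Your proposal is correct and follows essentially the same route as the paper's own proof: rewrite the posterior exponent as the augmented least-squares objective $\norm{Z\beta-z}^2$, use Lemma~\ref{lem:onlydata} (with $M_1=X$, $M_2=S$) to certify that the block-diagonal matrix $P$ is an $(\varepsilon/3)$-subspace embedding for the column\-space of $Z$, and then apply Lemma~\ref{thm:likelihood} to the triple $(Z,z,P)$. Your write-up is in fact slightly more careful than the paper's, since you explicitly verify that $p'$ coincides with the likelihood of the $P$-sketched augmented problem, a step the paper leaves implicit.
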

\begin{proof}
From our previous reasoning we know that approximating the posterior distribution can be reduced to approximating a likelihood function that is defined in terms of the data as well as the parameters of the prior distribution. This has been shown by rewriting Equation (\ref{eq:above}) above as $\norm{Z\beta-z}^2$. For that reason, we can apply Lemma \ref{thm:likelihood} to get the desired result if we are given an $(\varepsilon/3)$-subspace embedding for the column\-space of $Z$. Using Lemma \ref{lem:onlydata} we know that for this, it is sufficient to use an $(\varepsilon/3)$-subspace embedding for the column\-space of $[X,Y]$ independent of the covariance and mean that define the prior distribution. 
\end{proof}

Similar to Corollary \ref{corr:wassweight} we have the following result concerning the posterior distribution.

\begin{corollary}
\label{corr:wassweight2}
Let $\Pi$ be an $(\varepsilon/3)$-subspace embedding for the column\-space of $X$. Let $p_{\rm pre}(\beta)$ be an arbitrary Gaussian distribution with mean $m$ and covariance matrix $\Sigma=\varsigma^2(S^TS)^{-1}$. Let $$Z=\left[\begin{matrix}X\\S\end{matrix}\right] \quad\text{and}\quad z=\left[\begin{matrix}Y\\Sm\end{matrix}\right].$$ Let $\mu = \operatorname{argmin}_{\beta\in\RL^d} \norm{Z\beta-z}$ be the posterior mean. Let $p\propto \mathcal{L}(\beta |X,Y)\cdot p_{\rm pre}(\beta)$ and $p'\propto \mathcal{L}(\beta | \Pi X,\Pi Y)\cdot p_{\rm pre}(\beta)$. Let $\kappa(Z)$ be the condition number of $Z$. Assume that for some $\rho\in(0,1]$ we have $\norm{Z\mu} \geq \rho\norm{z}$. Then we have $$\mathcal W_2(p') \leq \left(1+\frac{\kappa(Z)}{\rho} \,\varepsilon \right)\;\mathcal W_2(p).$$
\end{corollary}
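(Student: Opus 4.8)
The plan is to mirror the proof of Corollary \ref{corr:wassweight} in the augmented setting, replacing the pair $(X,Y)$ by $(Z,z)$ and exploiting the reduction already established for Theorem \ref{thm:posterior}. Since $p$ is a Gaussian measure with mean $\mu$ and covariance $(Z^TZ)^{-1}$, its squared Wasserstein weight is $\mathcal W_2^2(p)=\norm{\mu}^2+\trace{(Z^TZ)^{-1}}$. The embedded posterior $p'$ has exponent $\norm{\Pi(X\beta-Y)}^2+\norm{S(\beta-m)}^2=\norm{PZ\beta-Pz}^2$ with $P=\begin{bmatrix}\Pi&0\\0&I\end{bmatrix}$, so it too is Gaussian, with mean $\nu=\operatorname{argmin}_{\beta}\norm{PZ\beta-Pz}^2$ and covariance $(Z^TP^TPZ)^{-1}$; hence $\mathcal W_2^2(p')=\norm{\nu}^2+\trace{(Z^TP^TPZ)^{-1}}$. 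The task therefore splits, as in Corollary \ref{corr:wassweight}, into controlling the trace term and the norm of the shifted mean $\nu$ separately.

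The second step is the structural one: by Lemma \ref{lem:onlydata}, since $\Pi$ is an $(\varepsilon/3)$-subspace embedding for the columnspace of $X$ (equivalently of $[X,Y]$), the block-diagonal map $P$ is an $(\varepsilon/3)$-subspace embedding for the columnspace of $Z$ (equivalently of $[Z,z]$). This single observation transfers the entire machinery developed for the plain data matrix to the augmented matrix $Z$. Applying Observation \ref{obs:singular} to $P$ and $Z$ then bounds the trace term by $\trace{(Z^TP^TPZ)^{-1}}\leq(1+\varepsilon)\trace{(Z^TZ)^{-1}}$.

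The third step bounds $\norm{\nu}$ exactly along the lines of Corollary \ref{corr:wassweight}. The triangle inequality gives $\norm{\nu}\leq\norm{\mu}+\norm{\nu-\mu}$; Lemma \ref{approx:means}, applied to $Z,z$ and $P$, yields $\norm{\nu-\mu}\leq\frac{\varepsilon}{\sigma_{\text{min}}(Z)}\norm{Z\mu-z}$; the assumption $\norm{Z\mu}\geq\rho\norm{z}$ together with the normal equation $Z^T(Z\mu-z)=0$ and the Pythagorean theorem gives $\norm{Z\mu-z}\leq\frac{1}{\rho}\norm{Z\mu}$; and finally $\norm{Z\mu}\leq\norm{Z}\norm{\mu}=\sigma_{\text{max}}(Z)\norm{\mu}$. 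Chaining these produces $\norm{\nu}\leq\left(1+\frac{\kappa(Z)}{\rho}\varepsilon\right)\norm{\mu}$. Combining with the trace bound and using $\frac{\kappa(Z)}{\rho}\geq1$, so that $(1+\varepsilon)\leq\left(1+\frac{\kappa(Z)}{\rho}\varepsilon\right)^2$, gives $\mathcal W_2^2(p')\leq\left(1+\frac{\kappa(Z)}{\rho}\varepsilon\right)^2\mathcal W_2^2(p)$, and taking square roots finishes the argument.

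The only genuinely substantive point is the reduction itself, not any of the estimates: one must be certain that applying $\Pi$ solely to $(X,Y)$ while leaving the prior block $S$ untouched corresponds precisely to the block map $P$ acting on $(Z,z)$, and that Lemma \ref{lem:onlydata} then certifies $P$ as a valid $(\varepsilon/3)$-embedding for $Z$ so that both Observation \ref{obs:singular} and Lemma \ref{approx:means} become applicable. Once this identification is secured, every remaining inequality is a verbatim copy of the corresponding line in the proof of Corollary \ref{corr:wassweight} with $X,Y,\gamma,\Pi$ replaced by $Z,z,\mu,P$, so I expect no further obstacle.
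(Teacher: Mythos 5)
Your proposal is correct and is precisely the argument the paper intends: the paper leaves this corollary unproved with the remark that it follows ``similar to Corollary \ref{corr:wassweight},'' and your proof fills in exactly that analogy, using Lemma \ref{lem:onlydata} to certify the block map $P$ as an $(\varepsilon/3)$-subspace embedding for the columnspace of $[Z,z]$ and then repeating the trace bound via Observation \ref{obs:singular} and the mean bound via Lemma \ref{approx:means} with $(Z,z,\mu,P)$ in place of $(X,Y,\gamma,\Pi)$. The one gloss you make --- treating an embedding for the columnspace of $X$ as interchangeable with one for $[X,Y]$ --- is a conflation already present in the paper's own statements, so it does not set your proof apart from theirs.
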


Both Theorem \ref{thm:posterior} and Corollary \ref{corr:wassweight2} show that the sketch preserves the expected value and the covariance structure of the posterior distribution very well. Note that for normal distributions, these parameters fully characterize the distribution as they are sufficient statistics. Therefore, one can see the corresponding parameters based on the sketched data set as very accurate approximate sufficient statistics for the posterior distribution.

%%% Simulation Study
%\input{SimulationStudy}

\section{Simulation Study}
\label{sec:simstudy}

To validate the proposed method empirically, we conduct a simulation study. For this, we employ MCMC methods to obtain the posterior distributions for the parameters of the Bayesian regressions. Please note that the sketching techniques can also be combined with other methods. We concentrate on MCMC methods, however, as they are very reliable, widely-used and allow for easy checking of convergence. The different sketching methods were implemented as described in Section \ref{sec:rapro:subspace} and technically more detailed in the given references. All codes were written in the C++ 11 programming language and compiled using GCC 4.7. For fast matrix multiplications we employed the LAPACK 3.5.0 library where applicable. Our R-package \texttt{RaProR} uses the above implementations. The package is available on its project website \citep{RaProR:2015}. All simulations were done using R, version 3.1.2 \citep{r:2014} and the R-package \texttt{rstan}, version 2.3 \citep{rstan:2013}. The simulations were conducted on an Intel Xeon E5430 quad-core CPU running at 2.66 GHz using 16 GB DDR2 memory on a Debian GNU Linux 7.8 distribution. The hard drive used was a Seagate Momentus 7200.4 G-Force 500 GB, SATA 3Gb/s HDD with 7200 rpm and 16 MB cache.

\subsection{Data generation}
\label{subsec:simstudy:datagen}

For the simulation study, we create a set of data sets. We vary the number of observations $n$, the number of variables $d$, and the standard deviation of the error term $\varsigma$. The variation of $n$ is introduced to monitor whether the running time of the analyses based on sketches is indeed independent of $n$ and also to see how well the proposed method deals with growing data sets. We choose values of $n \in \{50\,000, 100\,000, 500\,000, 1\,000\,000\}$. The size of the sketches depends mainly on the number of variables in the data set. For this reason, we conduct simulations with two values of $d$, $d=50$ and $d=100$. The reason for choosing rather small values of $n$ and $d$ is that our aim is to compare the results of the MCMC on the sketched data sets to the results on the respective original data set. The sketching methods presented here can handle larger values of $d$ and arbitrary values of $n$, but employing MCMC on the original data set then becomes unfeasible. The standard deviation of the error term is important, because the goodness of the approximation also depends on the goodness of the model (cf. Lemma \ref{thm:likelihood} and Theorem \ref{thm:posterior}). Here, we choose $\varsigma \in \{1, 2, 5, 10\}$, thus ranging from very well-fitting models to models with quite high error variance. 

The generated true values of $\beta$ follow a zero-inflated Poisson distribution, where the expected value of the Poisson distribution is $3$ and the probability of a component exhibiting an excess zero is equal to $0.5$. This means that the components of $\beta$ have no influence, i.e. are 0, with probability $0.5$ and follow a $\text{Poi}(3)$ distribution with probability $0.5$. All components that follow a $\text{Poi}(3)$ distribution are multiplied with $(-1)$ with probability $0.5$. The data set $X$ is obtained in two steps. At first, a $d$-dimensional vector that represents the column means is drawn randomly from a $N(0, 25)$ distribution. In a second step, the actual values of $X$ are drawn from a normal distribution with the column mean as expected value and variance of four. The variance in the columns of $X$ is thus lower than the error variance for two of our choices and the same or less for the other two choices. $Y$ is then generated by multiplying $X$ with $\beta$ and adding the error term, in accordance with the model.

\subsection{Regression model}
\label{subsec:simstudy:model}

We employ a standard Bayesian linear regression model (cf. Section \ref{sec:prelim:bayesreg})
\begin{equation*}
Y \sim N(X\beta, \varsigma^2 I_n)
%\label{eq:linmod}
\end{equation*}
with independent uniform priors over $\mathds{R}$ for all components of $\beta$, which are improper, non-informative prior distributions. For $\varsigma$, the uniform prior is limited to the positive part of $\mathds{R}$. We choose an improper uniform prior rather than an inverse gamma prior with small values for the hyperparameters as \cite{Gelman2006} indicates that such priors can have a skewing effect on the posterior distribution. When using improper prior distributions, it is necessary to ensure that the posterior distributions are proper. For our choice of uniform distributions, this does not pose a problem, since the uniform priors are represented by indicators, which are constant over $\mathds{R}$, $p_{\rm pre}(\beta) = \mathds{1}_{\mathds{R}^d}$, and for that reason do not influence the likelihood. Thus, it follows that
\begin{eqnarray*}
p_{\rm post}(\beta|X, Y) & \propto & \mathcal L(\beta | X,Y)\cdot \mathds{1}_{\mathds{R}^d}\\
 & = & \mathcal L(\beta | X,Y).
\end{eqnarray*}
 The remaining term is just the Gaussian likelihood which is proper with respect to both, $\beta$ and $\varsigma$ \citep[cf.][]{Gelman2014}. Although closed-form expressions are known for this model, we employ MCMC for the reasons motivated in Sections \ref{sec:background} and \ref{sec:prelim:bayesreg}.

Our theoretical guarantees comprise the posterior distributions of $\beta$. Our model is thus more general than the theoretical results. As an alternative, $\varsigma$ can be fixed to an estimated value obtained using $\|\Pi X \beta - \Pi Y\|_2/\sqrt{n}$. The results we present in the following are all based on $\beta$.

\subsection{Preliminary simulations}
\label{subsec:simstudy:prelim}

In a first step, we consider the running times necessary to carry out Bayesian regression for (non-embedded) data sets with an increasing number of observations $n$, employing the No-U-Turn-Sampler \citep{hoffman-gelman:2014}, which is implemented in the R-package \texttt{rstan}. We use standard settings and four chains, which are sampled in parallel. The resulting running times are plotted in Figure \ref{fig:preruntimed50}. The running time depends at least linearly on $n$, with occasional jumps that are probably due to swaps to external memory, which slows down the computation by a large factor. There is an outlier for the case of $n=50$, which takes a lot more time to compute than $n=100$ and even more than $n=200$. Since we have $d=50$ variables, the total number of parameters (52) is higher than the number of observations for this case only. While the linear dependency on the number of observations does not pose a problem for small to medium-sized data sets, for big data settings MCMC methods become unfeasible. This underlines the usefulness of embedded data sets.

\begin{figure}[tb]
\begin{center}
\includegraphics[width = 0.75\textwidth]{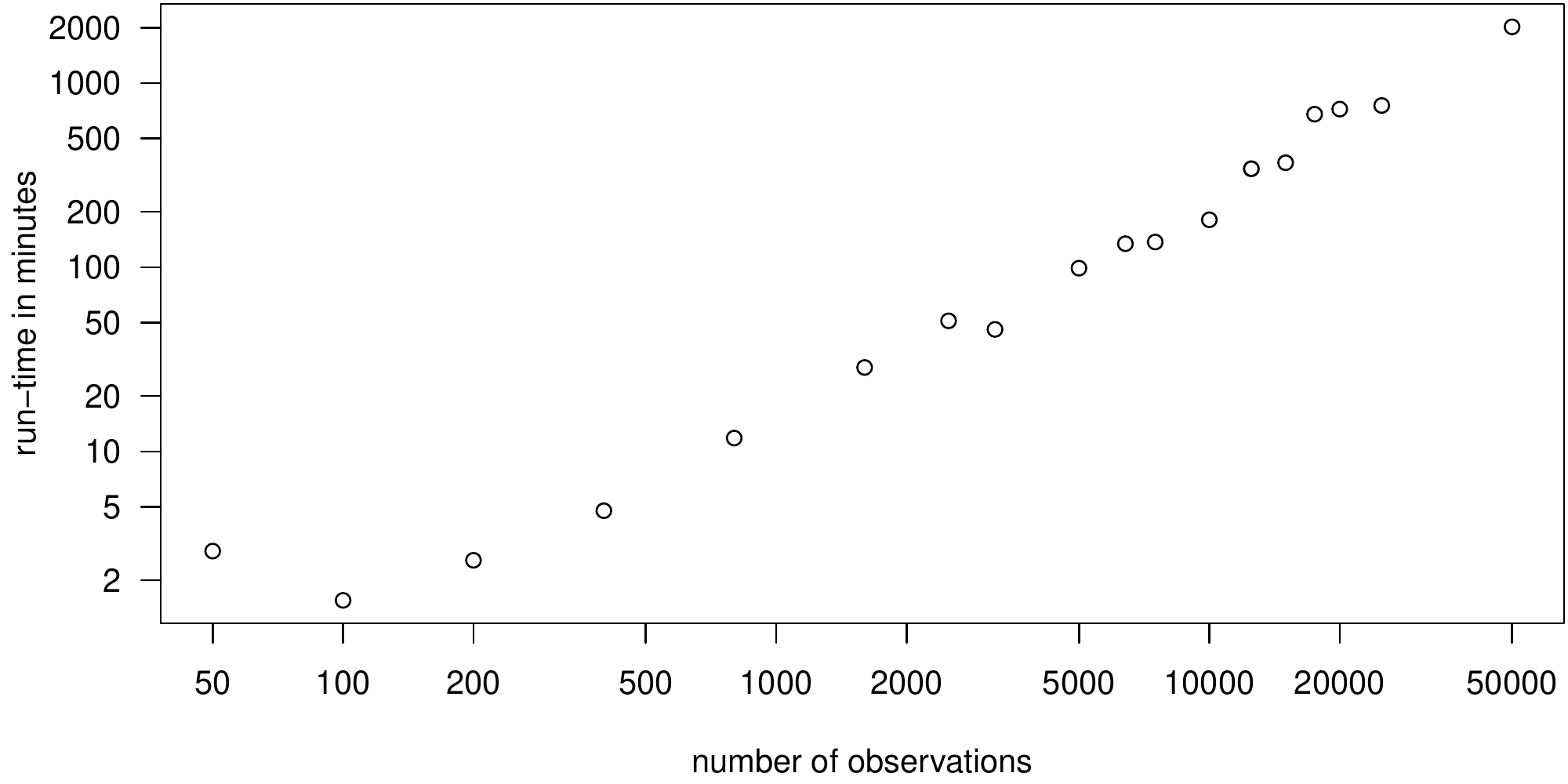}
\end{center}
\caption{Running times for simulated data sets with varying number of observations and $d=50$ variables}
\label{fig:preruntimed50}
\end{figure}

Before we consider our embedding methods, we pick up on the idea of using $\Pi = X^T$ as sketching matrix. If the data set is given row by row, the sketch $\Pi X = X^T X$ can be computed efficiently using the tensor product of the row vectors $X^T X = \sum x_i^T x_i$, which is a sum of $d\times d$ matrices with rank one. Analoguously, we have $\Pi Y = X^T Y = \sum x_i^T Y_i$. This results in a sketch of size $d \times (d+1)$, which is the smallest possible sketch for problems of full rank and has no error at all in the sense that the exact likelihood respectively posterior distribution can be calculated from these matrices. We have argued before that this method is not numerically stable in general. As we focus on MCMC in this work, we show how this effect influences the run of the MCMC sampler. We tried analyzing Bayesian linear regression models based on $X^T [X, Y]$, using some of the data sets described in Section \ref{subsec:simstudy:datagen} and also a smaller data set with $n=10\,000$, which was generated in the same way. We have found that the models do not converge in practice. Increasing the number of iterations does not seem to be a remedy as the variance of the MCMC sample grows with more iterations.  Figure \ref{fig:transdiv} shows an exemplary traceplot for one parameter, consisting of four chains. The range of the sample is enormous. The variation is high for all of the chains, they exhibit standard deviations in the order of $10^{9}$. When reducing the number of iterations from the $10\,000$ in Figure \ref{fig:transdiv} to, say, $5\,000$, the range of the MCMC sample decreases markedly. However, there is no sign of convergence with minimum and maximum around $-4\cdot 10^{9}$ and $4\cdot 10^{9}$, respectively.

\begin{figure}[tb]
\begin{center}
\includegraphics[width = 0.75\textwidth]{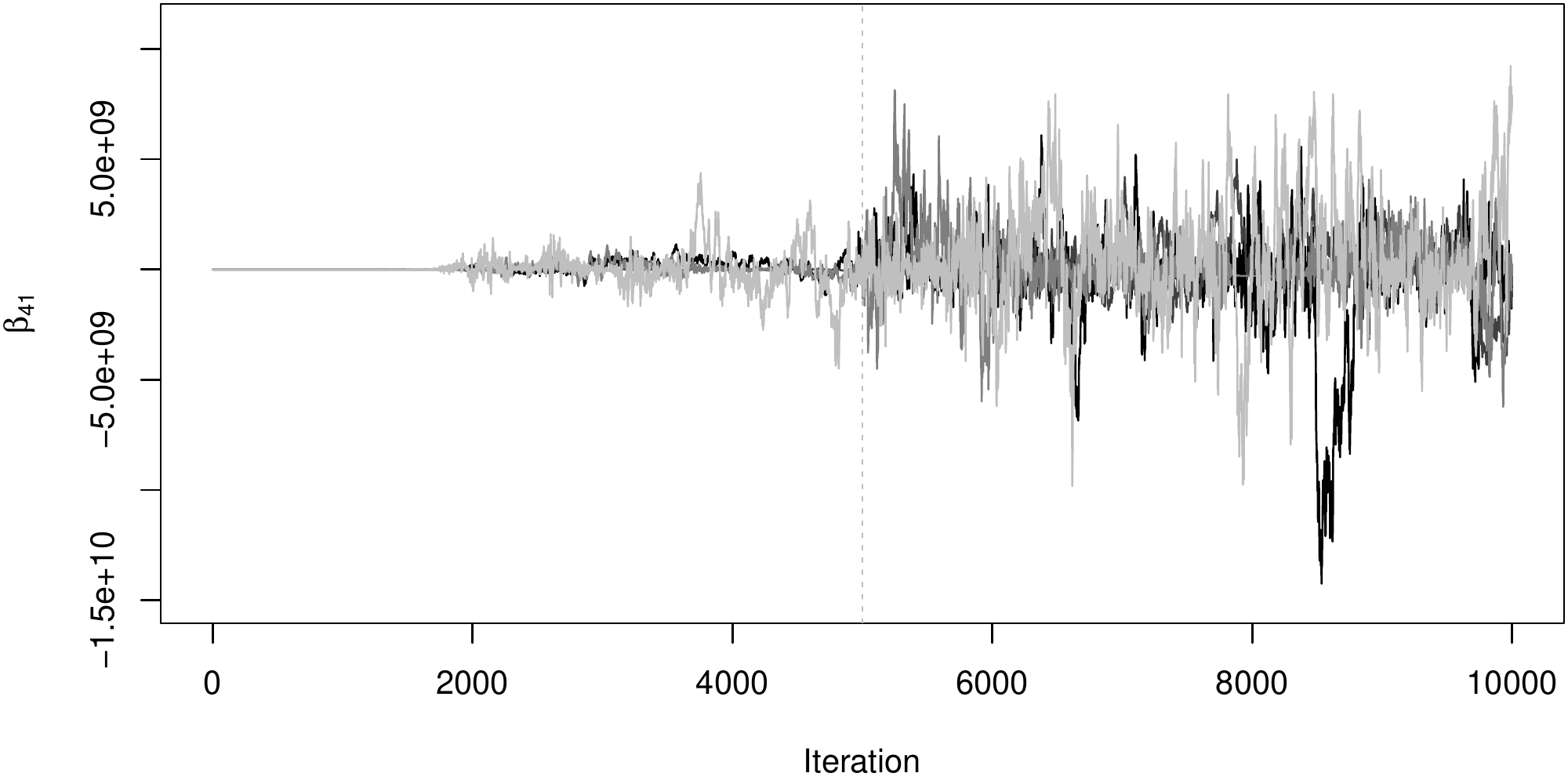}
\end{center}
\caption{Traceplot of MCMC sample (with four chains) for one parameter based on data set obtained using $\Pi = X^T$ as sketching matrix. Original data set contains $n=10\,000$ observations and $d = 50$ variables}
\label{fig:transdiv}
\end{figure}

We can deal with both of these issues using subspace embeddings as we can underline in our next experiments. We conduct a series of simulations that aims at comparing our proposed method to the standard meth\-od on the original data. To obtain the subspace embeddings, we employ the three approaches described in Section \ref{sec:prelim}. As approximation parameter, we choose $\varepsilon = 0.1$ and $\varepsilon =0.2$ for all three methods. We do not recommend using values of $\varepsilon>0.2$. Table \ref{tab:sketchsize} contains the number of observations of the sketches depending on the number of variables and the value of the approximation parameters. The sizes for RAD and SRHT are both set to $k=\lceil\frac{d\log d}{\varepsilon^2}\rceil$ to be comparable. They differ by one due to rounding errors. For the CW sketch we used $k$ equal to the smallest power of two larger than $\frac{d^2}{20\varepsilon^2}$. Please note that the CW embeddings generally result in a higher number of observations due to the quadratic dependency on $d$. However, the opposite is true for $50$ variables. This is due to the constants that we used. The constants are set to $1$ in the case of RAD and SRHT. For the RAD method this was empirically evaluated by \cite{SureshW11}. For CW the constant may be much smaller as indicated by a lower bound in \cite{NelsonN13lowerd2}. Preliminary experiments on small scale led to our choice of $\frac{1}{20}$.

%For $d=50$, the embedded data sets contain $k_\text{RAD}=20\,546$, $k_\text{SRHT}=20\,547$ and $k_\text{CW}=16\,384$ observations for $\varepsilon=0.1$ and $k_\text{RAD}=5\,136$, $k_\text{SRHT}=5\,137$ and $k_\text{CW}=4\,096$ observations for $\varepsilon=0.2$. When doubling the number of variables to $d=100$, the number of observations in the sketches increases to $k_\text{RAD}=47\,174$, $k_\text{SRHT}=47\,175$ and $k_\text{CW}=65\,536$ observations for $\varepsilon=0.1$ and $k_\text{RAD}=11\,793$, $k_\text{SRHT}=11\,794$ and $k_\text{CW}=16\,384$ for $\varepsilon=0.2$, respectively. 
%Zusaetzlich zu den theoretischen Garantien haben wir sigma als normalen Parameter ins Modell aufgenommen, klappt aber trotzdem.

\begin{table}[tb]
\begin{center}
\begin{tabular}{rrrrr}
\toprule
$d$ & $\varepsilon$ & RAD & SRHT & CW\\
\midrule
50 & 0.1 & $20\,546$ & $20\,547$ & $16\,384$\\
50 & 0.2 & $5\,136$ & $5\,137$ & $4\,096$\\
\midrule
100 & 0.1 & $47\,174$ & $47\,175$ & $65\,536$\\
100 & 0.2 & $11\,793$ & $11\,794$ & $16\,384$\\
\bottomrule
\end{tabular}
\end{center}
\caption{Number of observations of the sketches for different values of $d$ and $\varepsilon$}
\label{tab:sketchsize}
\end{table}%

\subsection{Comparison of posterior means}
\label{sec:simstudy:postmeans}

To evaluate the results, we first compare the posterior means of the models based on the embedded data sets with the posterior means of the model based on the original data set. Table \ref{tab:absdevfulld50} contains an overview of the sum of squared distances between the embedded data sets' posterior means and those of the original model. Geometrically, this is the squared Euclidean distance of the posterior mean vectors. As indicated by Theorem \ref{thm:posterior}, the sum of squared distances grows with the standard deviation of the error term. There does not seem to be a systematic difference in performance between the different sketching methods. With larger $\varepsilon$, we usually, but not necessarily observe an increase in the distance. Please note that some values are missing, because the original models did not converge within reasonable time bounds.

%%\begin{figure}[tb]
%%\begin{center}
%%\includegraphics[scale = 0.45]{absdevfulld50}
%%\end{center}
%%\caption{Sum of absolute deviations of posterior mean values from full model's posterior mean for $n=50\,000$ and $d=50$}
%%\label{fig:absdevfulld50}
%%\end{figure}

\begin{table}[tb]
\begin{center}
\begin{tabular}{lllrrrr}
  \toprule
n & sketch & $\varepsilon$ & $\varsigma=1$ & $\varsigma=2$ & $\varsigma=5$ & $\varsigma=10$ \\ 
  \midrule
  $5\cdot 10^4$ & RAD & 0.1 & 0.052 & 0.025 & 0.021 & 0.834 \\ 
  $5\cdot 10^4$ & RAD & 0.2 & 0.014 & 0.781 & 0.892 & 1.512 \\ 
  $5\cdot 10^4$ & SRHT & 0.1 & 0.001 & 0.009 & 0.021 & 0.165 \\ 
  $5\cdot 10^4$ & SRHT & 0.2 & 0.004 & 0.077 & 0.093 & 0.757 \\ 
  $5\cdot 10^4$ & CW & 0.1 & 0.025 & 0.004 & 0.021 & 0.195 \\ 
  $5\cdot 10^4$ & CW & 0.2 & 0.016 & 0.040 & 0.156 & 0.915 \\ 
  $1\cdot 10^5$ & RAD & 0.1 &  &  & 0.836 & 0.958 \\ 
  $1\cdot 10^5$ & RAD & 0.2 &  &  & 0.061 & 0.777 \\ 
  $1\cdot 10^5$ & SRHT & 0.1 &  &  & 0.025 & 0.964 \\ 
  $1\cdot 10^5$ & SRHT & 0.2 &  &  & 0.171 & 0.617 \\ 
  $1\cdot 10^5$ & CW & 0.1 &  &  & 0.056 & 3.844 \\ 
  $1\cdot 10^5$ & CW & 0.2 &  &  & 2.624 & 2.937 \\ 
\bottomrule
\end{tabular}
\end{center}
\caption{Sum of squared distances between posterior mean values of the original model and models based on the respective sketches}
\label{tab:absdevfulld50}
\end{table}

In addition to the comparison to the original models' mean, we also compare the posterior means to the true means. Table \ref{tab:absdevtrued50} contains the sum of the squared distances between the true mean for $d=50$ and varying values of $\varsigma$. The general picture looks very similar to the results in Table \ref{tab:absdevfulld50}. The original model often exhibits the smallest sum of squared distances, but sometimes models based on embedded data sets are closer to the true mean. Again, there does not seem to be a systematic difference between the sketching methods. The squared distances do not seem to be influenced by the value of $n$, with some squared distances even exhibiting smaller values for larger $n$.

%%\begin{figure}[tb]
%%\begin{center}
%%\includegraphics[scale = 0.45]{absdevtrued50}
%%\end{center}
%%\caption{Sum of absolute deviations of posterior mean values from true mean for $n=50\,000$ and $d=50$}
%%\label{fig:absdevtrued50}
%%\end{figure}

\begin{table}[tb]
\begin{center}
%\resizebox{\columnwidth}{!}{
\begin{tabular}{lllrrrr}
  \toprule
n & sketch & $\varepsilon$ & $\varsigma=1$ & $\varsigma=2$ & $\varsigma=5$ & $\varsigma=10$ \\ 
  \midrule
$5 \cdot 10^4$ & none &  & 0.000 & 0.003 & 0.065 & 4.614 \\ 
  $5 \cdot 10^4$ & RAD & 0.1 & 0.048 & 0.016 & 0.124 & 1.718 \\ 
  $5 \cdot 10^4$ & RAD & 0.2 & 0.012 & 0.710 & 0.506 & 10.845 \\ 
  $5 \cdot 10^4$ & SRHT & 0.1 & 0.001 & 0.018 & 0.032 & 3.372 \\ 
  $5 \cdot 10^4$ & SRHT & 0.2 & 0.005 & 0.059 & 0.046 & 8.721 \\ 
  $5 \cdot 10^4$ & CW & 0.1 & 0.022 & 0.011 & 0.046 & 6.474 \\ 
  $5 \cdot 10^4$ & CW & 0.2 & 0.014 & 0.056 & 0.089 & 1.870 \\ 
  $1 \cdot 10^5$ & none &  &  &  & 0.065 & 0.035 \\ 
  $1 \cdot 10^5$ & RAD & 0.1 & 0.007 & 0.031 & 1.354 & 0.679 \\ 
  $1 \cdot 10^5$ & RAD & 0.2 & 0.033 & 0.009 & 0.117 & 0.579 \\ 
  $1 \cdot 10^5$ & SRHT & 0.1 & 0.030 & 0.136 & 0.040 & 0.696 \\ 
  $1 \cdot 10^5$ & SRHT & 0.2 & 0.007 & 0.125 & 0.387 & 0.453 \\ 
  $1 \cdot 10^5$ & CW & 0.1 & 0.004 & 0.232 & 0.022 & 4.496 \\ 
  $1 \cdot 10^5$ & CW & 0.2 & 0.011 & 0.072 & 3.484 & 3.473 \\ 
  $5 \cdot 10^5$ & RAD & 0.1 & 0.009 & 0.223 & 0.563 & 12.920 \\ 
  $5 \cdot 10^5$ & RAD & 0.2 & 0.045 & 0.322 & 1.729 & 0.658 \\ 
  $5 \cdot 10^5$ & SRHT & 0.1 & 0.009 & 0.147 & 0.418 & 0.059 \\ 
  $5 \cdot 10^5$ & SRHT & 0.2 & 0.016 & 0.033 & 0.085 & 2.978 \\ 
  $5 \cdot 10^5$ & CW & 0.1 & 0.027 & 0.097 & 1.305 & 0.153 \\ 
  $5 \cdot 10^5$ & CW & 0.2 & 0.050 & 0.009 & 0.135 & 3.579 \\ 
  $1 \cdot 10^6$ & RAD & 0.1 & 0.001 & 0.016 & 0.126 & 3.967 \\ 
  $1 \cdot 10^6$ & RAD & 0.2 & 0.080 & 0.011 & 0.072 & 1.357 \\ 
  $1 \cdot 10^6$ & SRHT & 0.1 & 0.002 & 0.010 & 0.599 & 0.288 \\ 
  $1 \cdot 10^6$ & SRHT & 0.2 & 0.002 & 0.183 & 2.029 & 4.329 \\ 
  $1 \cdot 10^6$ & CW & 0.1 & 0.002 & 0.289 & 1.202 & 4.445 \\ 
  $1 \cdot 10^6$ & CW & 0.2 & 0.003 & 0.047 & 0.100 & 0.395 \\ 
   \bottomrule
\end{tabular}%}
\end{center}
\caption{Sum of squared distances between true mean values and posterior means of models based on the respective sketches}
\label{tab:absdevtrued50}
\end{table}

\subsection{Comparison of fitted values}
\label{sec:simstudy:fitted}

After this comparison on the level of parameters -- whose number is not changed by sketching -- we will compare the models on the level of observations, of which the sketches contain merely a fraction of the number of observations in the original data set. We multiply $X$ with the posterior mean vector of $\beta$, where this posterior mean can be based on the original data set or on the respective sketches. In a frequentist sense, these are fitted values $\hat Y$, but all $X$ values are taken from the original data set, not necessarily from the data set the model is based on. This is done to see how close the approximation is on the level of $Y$ values for both $\varepsilon=0.1$ and $\varepsilon=0.2$. Figure \ref{fig:fv244CvsF} is a scatterplot which contains smoothed densities. The fitted values based on the original model are on the $x$-axis while the fitted values based on the CW sketch (with $\varepsilon=0.1$) are on the $y$-axis. Darker shades of black stand for more observations. Even though the fitted values are based on one of the data sets with the highest standard deviation of the error  ($n=50\,000, \varsigma=10$), all values are close or reasonably close to the bisecting line. This means that the fitted values obtained by the two models do not differ by much. To get a better overview, Figure \ref{fig:fvdiff244vsFbp} depicts the distances between the fitted values as boxplots. Here, all three sketching methods with both $\varepsilon=0.1$ and $\varepsilon=0.2$ are included. All six sets of distances are centered around zero. The effect of the approximation parameter $\varepsilon$ is evident from the boxplot, the variation is larger for $\varepsilon=0.2$ regardless of the sketching method. When fixing $\varepsilon$, all three sketching methods exhibit similar results, although the RAD sketch seems to introduce slightly more variation into the differences than the other two sketching methods, thus prediction for the data used in learning the model is highly accurate.

\begin{figure}[tb]
\begin{center}
\includegraphics[width = 0.375\textwidth]{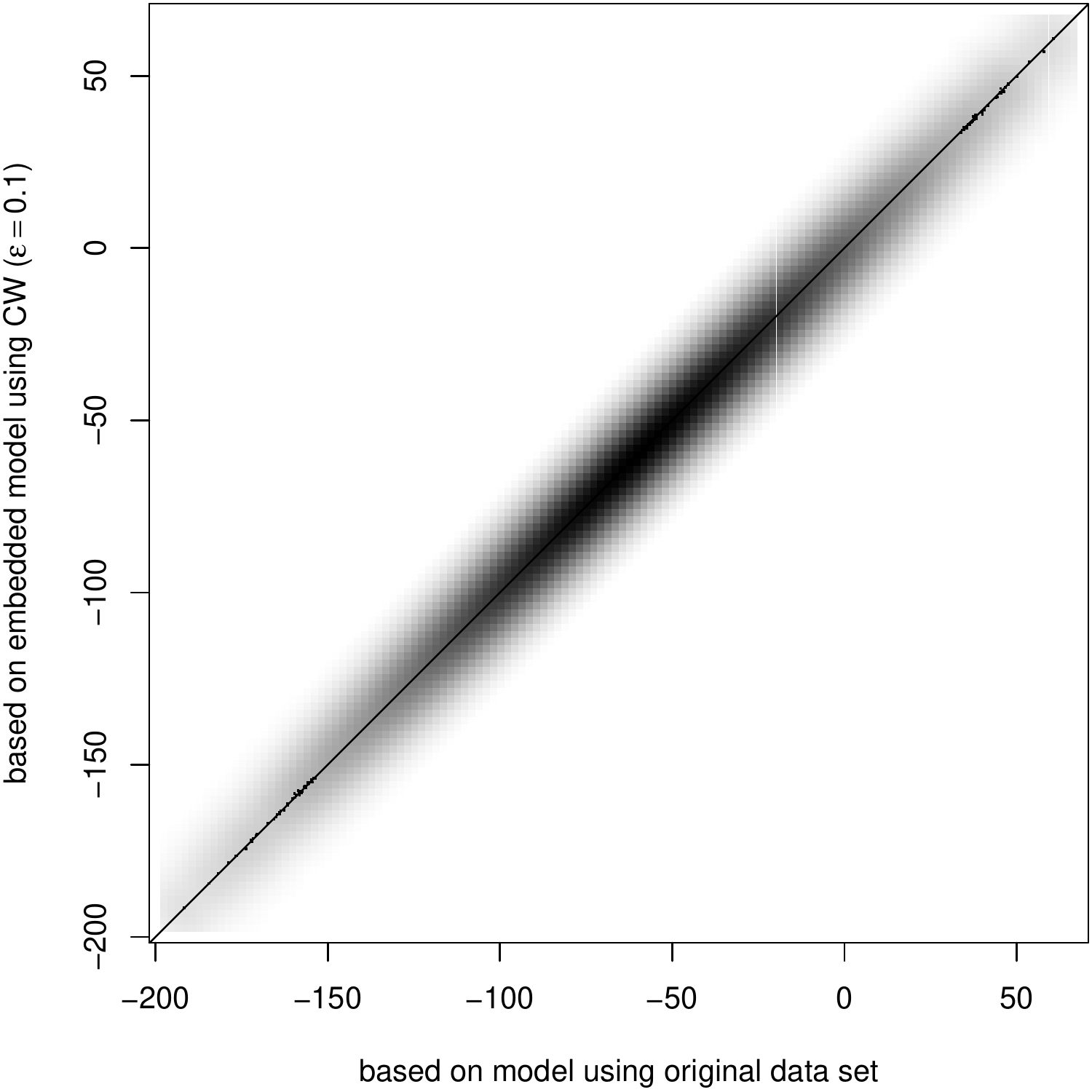}
\end{center}
\caption{Comparison of fitted values based on the original data set with $n=50\,000$, $d=50$, $\varsigma=10$ and a CW sketch with $\varepsilon=0.1$. Darker shades of black stand for more observations}
\label{fig:fv244CvsF}
\end{figure}

\begin{figure}[tb]
\begin{center}
\includegraphics[width = 0.75\textwidth]{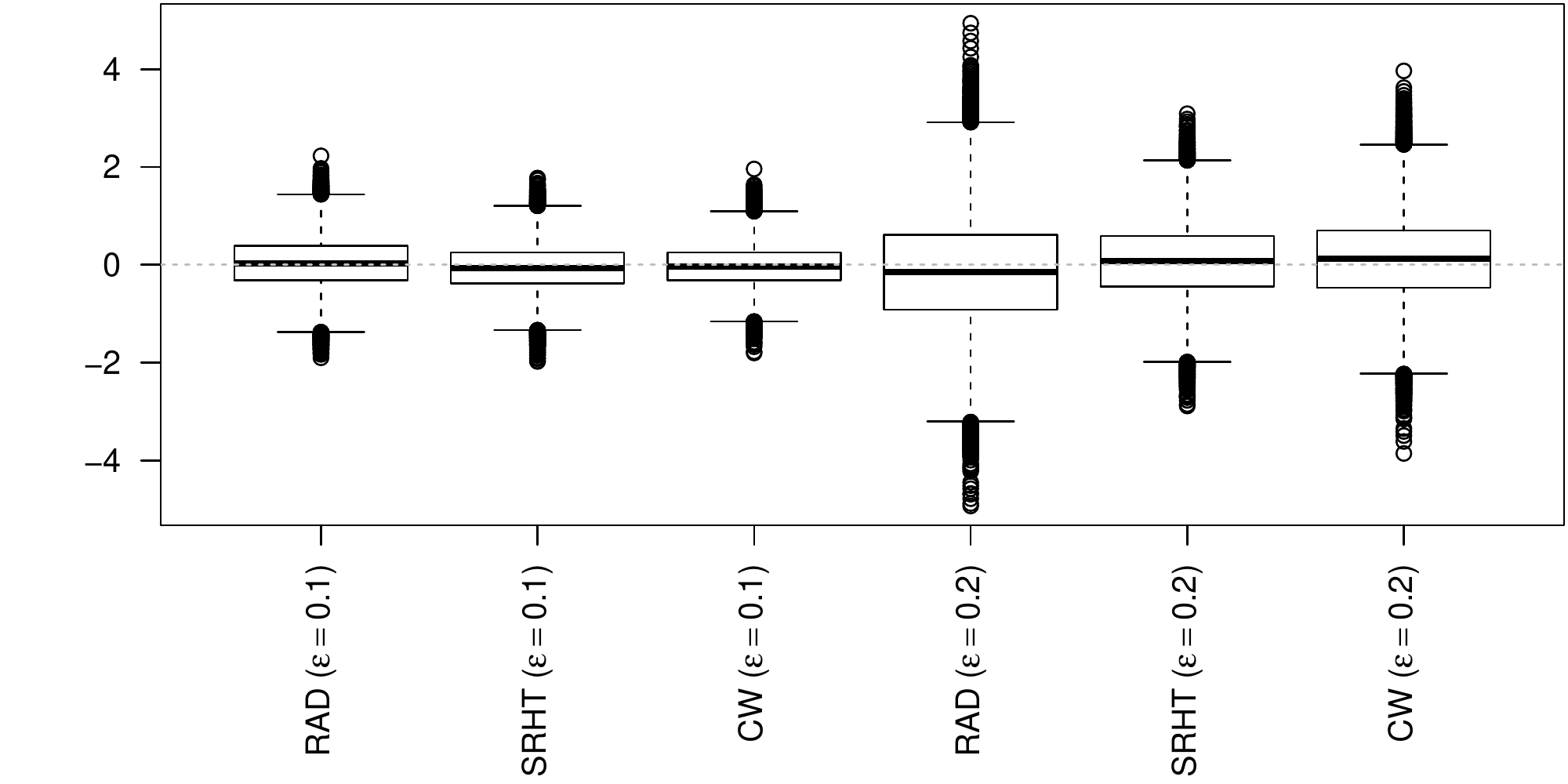}
\end{center}
\caption{Difference of fitted values according to models based on the respective sketching methods and fitted values according to model based on original data set with $n=50\,000$, $d=50$, $\varsigma=10$}
\label{fig:fvdiff244vsFbp}
\end{figure}

We have also generated additional data that has not been used in learning the model and employed the posterior mean to predict $y$-values for these data. The results are very similar to those described above and presented in Figures \ref{fig:fv244CvsF} and \ref{fig:fvdiff244vsFbp}. Sketching, again, introduces a little more variation, depending on $\varepsilon$. More formal treatment of prediction accuracy based on similar sketching techniques for the OLS solution is given in \cite{RaskuttiM15}.

\subsection{Comparison of posterior distributions}
\label{sec:simstudy:postdist}

As we have conducted Bayesian regression the model consists not only of a mean value, but of a whole posterior distribution for each parameter. Figure \ref{fig:MCMCdist243beta11_33} contains two exemplary boxplots of MCMC samples representing marginal posterior distributions. The original data set contains $n=50\,000$ observations, $d=50$ variables and has an error standard deviation of $\varsigma=5$. The medians of the MCMC samples based on the original data set are well-represented by the MCMC samples based on sketches. Even though the median based on an embedded data set might be higher or lower for certain parameters, we did not find any systematic biases. The embedding introduces additional variation, which depends on the value of the approximation parameter $\varepsilon$, but does not seem to be influenced by the choice of sketching method.

In regression, a common task is the identification of important variables by means of variable selection. In a Bayes\-ian setting, this can be done via credible intervals, among other approaches. Our results indicate that the identification of important variables is quite accurately possible based on the resulting approximate models. However, one has to take the additional variation into account. Exemplarily, when using 95\% credible intervals as criterion, one should not compare the endpoints of the credible interval to a fixed value $\mu$. Instead, take the extra variation in the posterior distribution and also possible small shifts of the mean and median into account. We therefore recommend using smaller values of $\varepsilon$ when aiming at variable selection (see Figure \ref{fig:MCMCdist243beta11_33}).
%
%compare whether the credible intervals lie outside of $[\mu-\varepsilon, \mu+\varepsilon]$ as this implies that the real value is distinct from $\mu$ with high probability. Likewise, when comparing two variables, the credible intervals should not intersect by some margin, i.e. the two closest endpoints should differ by at least $2\varepsilon$ to conclude the variables are distinct with high probability.}

\begin{figure}[tb]
\begin{center}
\includegraphics[width = 0.75\textwidth]{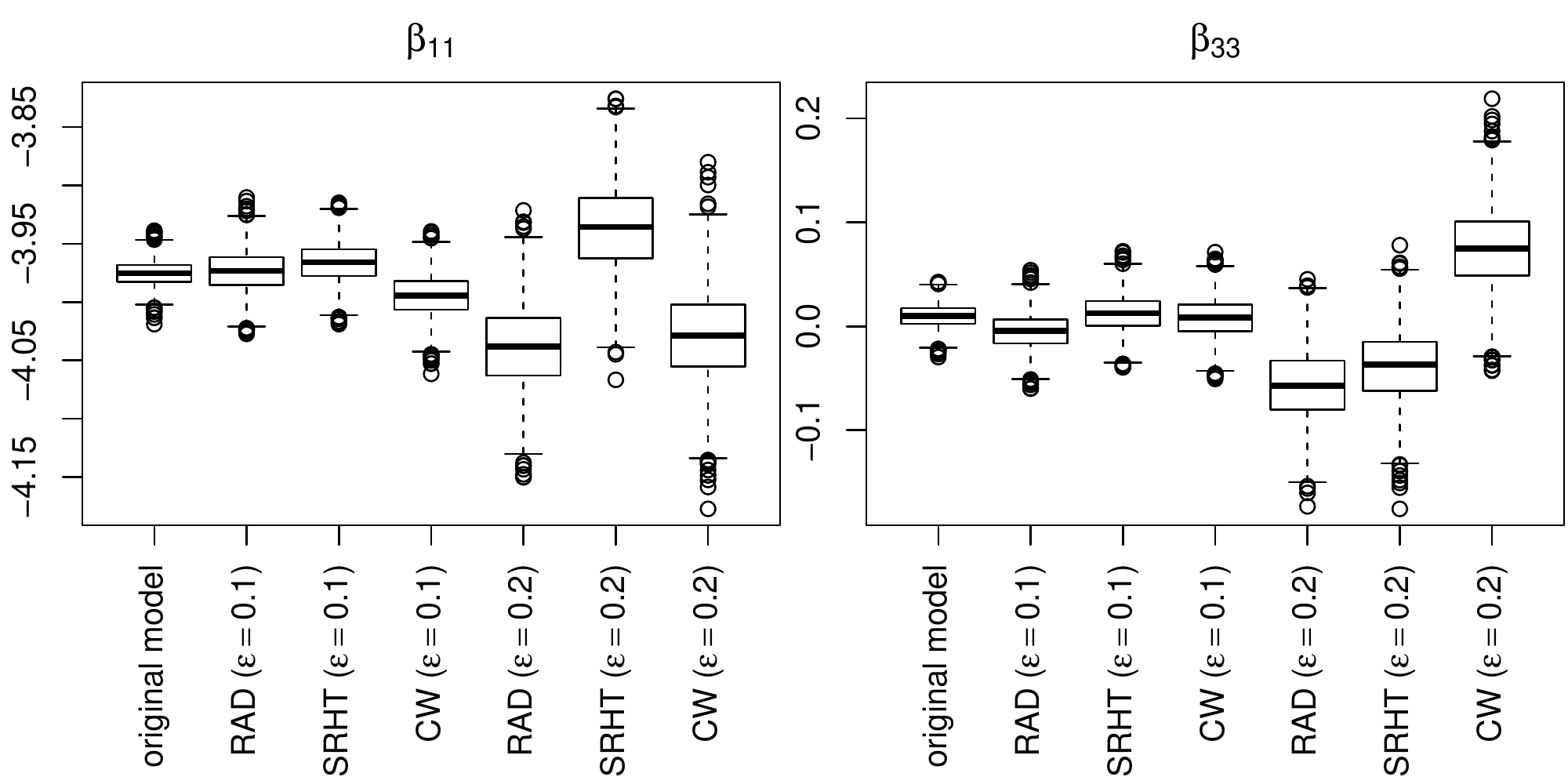}
\end{center}
\caption{Boxplots of MCMC sample for two parameters based on data set with $n=50\,000$, $d=50$, $\varsigma=5$ and respective sketches}
\label{fig:MCMCdist243beta11_33}
\end{figure}

\subsection{Comparison of running times}
\label{sec:simstudy:runningtimes}

\begin{table}[p]
\begin{center}
\begin{tabular}{rlrp{0.05cm}rrrrp{0.05cm}rrrr}
  \toprule
 & & & & \multicolumn{4}{c}{Preprocessing} & & \multicolumn{4}{c}{Analysis}\\
% \midrule
n & sketch & $\varepsilon$ & & $\varsigma=1$ & $\varsigma=2$ & $\varsigma=5$ & $\varsigma=10$ & & $\varsigma=1$ & $\varsigma=2$ & $\varsigma=5$ & $\varsigma=10$\\
  \midrule
 $5\cdot 10^4$ & none & & & 0.32 & 0.41 & 0.43 & 0.44 & & 1095.55 & 749.10 & 616.47 & 498.68 \\ 
  $5\cdot 10^4$ & RAD & 0.1 & & 1.60 & 1.68 & 1.68 & 1.73 & & 315.12 & 213.42 & 156.79 & 154.73 \\ 
  $5\cdot 10^4$ & RAD & 0.2 & & 0.40 & 0.39 & 0.42 & 0.43  & & 23.17 & 26.00 & 17.48 & 21.81 \\ 
  $5\cdot 10^4$ & SRHT & 0.1 & & 0.03 & 0.02 & 0.03 & 0.03 & & 317.34 & 278.39 & 181.94 & 166.41 \\ 
  $5\cdot 10^4$ & SRHT & 0.2 & & 0.02 & 0.02 & 0.02 & 0.02 & & 29.04 & 30.65 & 23.26 & 26.00 \\ 
  $5\cdot 10^4$ & CW & 0.1 & & 0.01 & 0.01 & 0.01 & 0.01 & & 375.22 & 293.89 & 164.56 & 171.82 \\ 
  $5\cdot 10^4$ & CW & 0.2 & & 0.01 & 0.01 & 0.01 & 0.01 & & 26.92 & 25.77 & 20.57 & 22.94 \\ 
  $1\cdot 10^5$ & none & & & 0.69 & 0.83 & 1.02 & 1.05 & &  &  & 2035.81 & 1617.28 \\ 
  $1\cdot 10^5$ & RAD & 0.1 & & 3.27 & 3.41 & 3.41 & 3.27 & & 278.87 & 260.80 & 167.24 & 182.92 \\ 
  $1\cdot 10^5$ & RAD & 0.2 & & 0.76 & 0.84 & 0.84 & 0.80 & & 21.44 & 23.21 & 17.52 & 23.65 \\ 
  $1\cdot 10^5$ & SRHT & 0.1 & & 0.05 & 0.06 & 0.05 & 0.05 & & 284.96 & 282.20 & 128.60 & 196.48 \\ 
  $1\cdot 10^5$ & SRHT & 0.2 & & 0.04 & 0.04 & 0.05 & 0.04 & & 23.72 & 26.82 & 21.52 & 22.70 \\ 
  $1\cdot 10^5$ & CW & 0.1 & & 0.02 & 0.03 & 0.02 & 0.02 & & 257.50 & 278.20 & 186.95 & 198.78 \\ 
  $1\cdot 10^5$ & CW & 0.2 & & 0.02 & 0.02 & 0.02 & 0.02 & & 21.94 & 26.29 & 21.22 & 23.45 \\ 
  $5\cdot 10^5$ & none &  & &  5.49 & 5.16 & 5.92 & 5.71 &&   &  &  & \\ 
  $5\cdot 10^5$ & RAD & 0.1 & & 16.88 & 15.96 & 16.10 & 16.36 & & 279.81 & 313.33 & 165.85 & 198.40 \\ 
  $5\cdot 10^5$ & RAD & 0.2 & & 3.73 & 4.00 & 4.03 & 3.85 & & 27.37 & 27.19 & 17.22 & 19.58 \\ 
  $5\cdot 10^5$ & SRHT & 0.1 & & 0.20 & 0.21 & 0.20 & 0.21 & & 310.20 & 308.01 & 190.78 & 190.18 \\ 
  $5\cdot 10^5$ & SRHT & 0.2 & & 0.19 & 0.18 & 0.19 & 0.19 & & 31.37 & 25.62 & 22.26 & 24.76 \\ 
  $5\cdot 10^5$ & CW & 0.1 & & 0.09 & 0.09 & 0.09 & 0.10 & & 335.74 & 300.32 & 189.33 & 166.92 \\ 
  $5\cdot 10^5$ & CW & 0.2 & & 0.09 & 0.08 & 0.09 & 0.09 & & 26.03 & 25.23 & 24.39 & 22.86 \\ 
  $1\cdot 10^6$ & none &  & &  18.23 & 12.88 & 12.59 & 14.09 & &  &  &  &  \\ 
  $1\cdot 10^6$ & RAD & 0.1 & & 51.77 & 147.42 & 33.75 & 34.71 & & 209.19 & 279.03 & 215.78 & 145.64 \\ 
  $1\cdot 10^6$ & RAD & 0.2 & & 7.92 & 8.46 & 8.38 & 8.21 & & 21.27 & 19.93 & 22.87 & 23.43 \\ 
  $1\cdot 10^6$ & SRHT & 0.1 & & 0.41 & 0.49 & 0.61 & 0.62 & & 341.12 & 264.99 & 294.04 & 154.77 \\ 
  $1\cdot 10^6$ & SRHT & 0.2 & & 0.39 & 1.44 & 0.68 & 0.68 & & 26.61 & 31.32 & 19.69 & 23.69 \\ 
  $1\cdot 10^6$ & CW & 0.1 & & 0.19 & 0.27 & 0.38 & 0.46 & & 281.72 & 232.40 & 175.49 & 144.02 \\ 
  $1\cdot 10^6$ & CW & 0.2 & & 0.21 & 0.19 & 0.45 & 0.39 & & 28.58 & 19.50 & 22.05 & 9.72 \\ 
   \bottomrule
 \end{tabular}
\caption{Running times for data sets with $d = 50$. Columns 4 to 7 (``Preprocessing'') contain the running times of the sketching methods in minutes, for the original data set, the values represent the time required to read the data set into memory, which is a prerequisite for every sketching method. The four columns to the right (``Analysis'') contain the running times of the Bayesian linear regression analysis in minutes}
\label{tab:completetimed50}
\end{center}
\end{table}

One of our aims is to make Bayesian regression feasible on very large data sets. After ensuring that the results are close to those obtained by the analysis on the original data set, we will now contemplate the running time required. The total running time is composed of the time required for the analysis and the time required for the necessary preliminary steps: reading the data set into memory and calculating the sketch. Table \ref{tab:completetimed50} contains the running times required for the Bayesian regression analysis (four rightmost columns headed ``Analysis'') and the running times for the preliminary steps (columns headed ``Preprocessing''). For the original data sets, the preliminary steps consist of the time required for reading the data set into memory, for all other cases, Table \ref{tab:completetimed50} gives the running times required for constructing the sketch. The total running time of an analysis on a sketch is obtained by summing up the reading time, the sketching time, and the time required for the Bayesian linear regression analysis.

Table \ref{tab:completetimed50} suggests that both the reading times and the sketching times are stable for data sets of the same size, with the possible exception of an outlier for $n=1\,000\,000$ and $\varsigma=2$. Both the reading times and the sketching times grow with the size of the data set. The sketching times grow for smaller values of $\varepsilon$. For RAD sketches with $\varepsilon=0.1$, the sketching takes longer than the reading of the data set, for all other combinations the opposite is true. CW sketches require the shortest amount of running time of the three sketching methods. 

Although only a few of the original data sets could be analyzed, the ``Analysis'' values in Table \ref{tab:completetimed50} indicate that the running times for the Bayesian analysis increase with the number of observations in the data set. The running times for the sketched data sets on the other hand show no systematic increase for growing values of $n$.  There is some variation in the running times, but this seems to be more random chance than trend. Larger values of $\varepsilon$ lead to shorter running times in the analysis, which indicates that the trade-off between time and goodness of the approximation is present both in calculating and analyzing the sketch. The running time of the analyses is similar for all three sketching methods. However, the running times do seem to depend on the value of $\varsigma$. For higher standard deviations of the error term, the required running time tends to become less.

%%\begin{figure}[tb]
%%\begin{center}
%%\includegraphics[scale = 0.45]{heatmapd50}
%%\end{center}
%%\caption{Running times for data sets with $d=50$}
%%\label{fig:d50runheatmap}
%%\end{figure}

\begin{figure}[bt]
\begin{center}
\includegraphics[width = 0.75\textwidth]{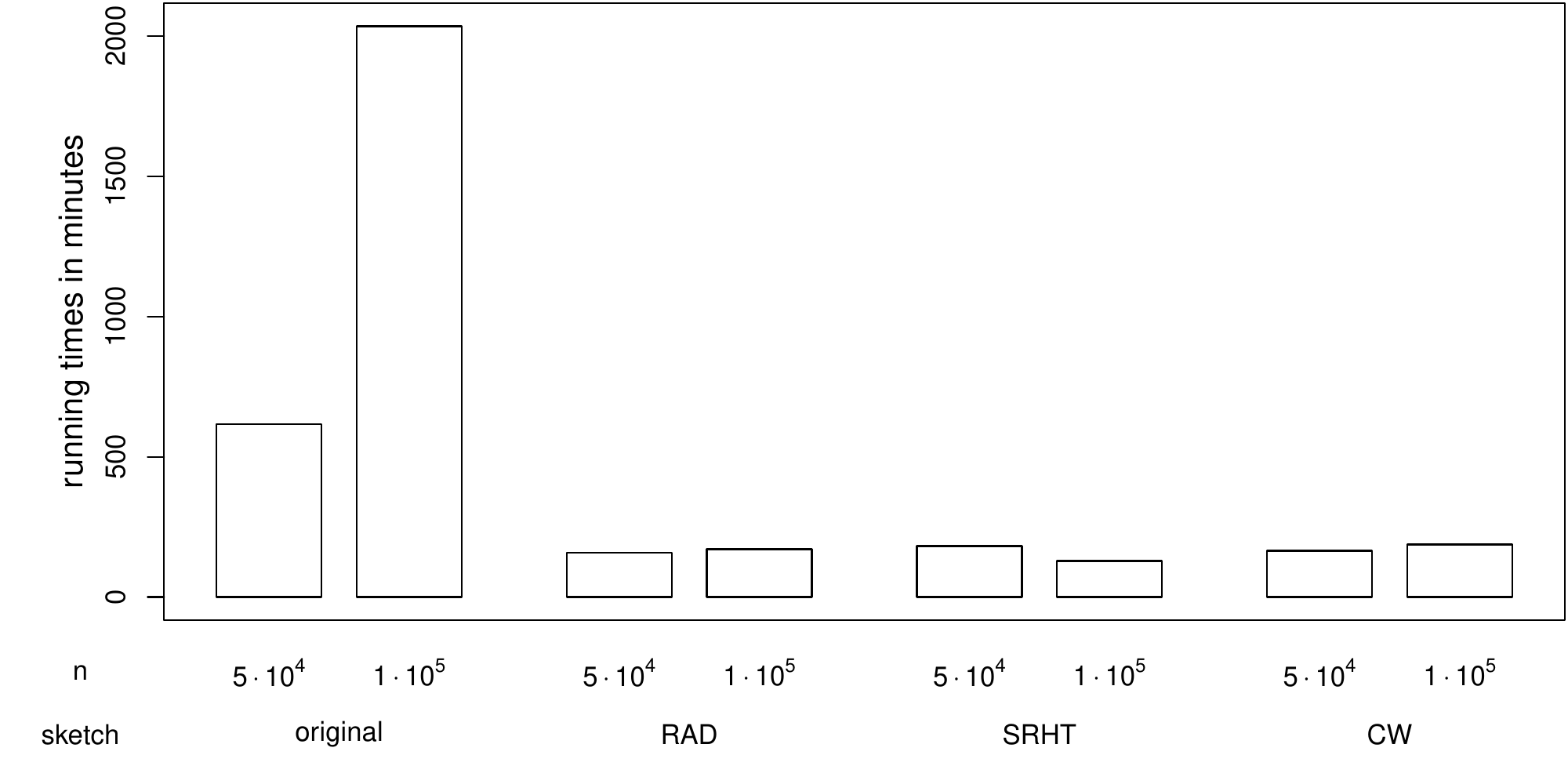}
\end{center}
\caption{Total running times in minutes for data sets with $n\in\{50\,000, 100\,000\}$, $d=50$, $\varsigma=5$ and approximation parameter $\varepsilon=0.1$. For the sketched data sets, the total running time consists of the time for reading, sketching and analyzing the data set. For the original data set, the sketching time is 0 since this step is not applied}
\label{fig:d50runningtimesvarsigma5}
\end{figure}

Figure \ref{fig:d50runningtimesvarsigma5} exemplarily shows the total running times in minutes for data sets with $d=50$ and $\varsigma=5$. This comprises reading, sketching (if applicable) and analyzing the data set. For the sketches, $\varepsilon=0.1$ was used. To illustrate the potential improvement with increasing $n$, Figure \ref{fig:d50runningtimesvarsigma5} contains the total running times for $n=50\,000$ and $n=100\,000$ side by side. For the original data sets, we observe a large increase in running time as $n$ doubles, whereas the running times on the sketched data sets hardly change.

All results presented so far are based on data sets with $d=50$ variables. We have conducted Bayesian analyses on data sets with $d=100$ with the same parameter values for $n$ and $\varsigma$. The findings from these simulations are similar to those for $d=50$. One exception is that the analysis of CW sketches now takes longer than the analysis of the respective original data sets for $n=50\,000$. This changes for larger data sets. One strength of the CW method, however, is its efficiency when dealing with streaming data when the number of variables is not too large. Recall that its dependency on $d$ is quadratic, which means that the sketch sizes are already very large for medium dimensional problems with $d\approx 500$ or $d\approx 1\,000$. Setting the target dimension to a lower value is not a remedy, because this results in a weaker approximation guarantee as also noticed by \cite{YangMM15}. In that case one should consider using one of the slower and denser sketches.

\subsection{Streaming example and concluding remarks}
\label{sec:simstudy:stream}

The data sets considered so far were chosen to be small enough to allow for Bayesian analysis on the full data set in a convenient time. This is by far not what might be called Big Data. To show that our approach is suitable for analyzing really big amounts of data, we have generated and at the same time sketched a data set. The generation of the data followed the same rules as the other simulated data sets, but with $\varsigma = 0.1$. 
The data set's original size is \mbox{$10^9\times 100$} double precision values. This corresponds to about 2 TB in CSV format or at least 750 GB in a binary representation. We have used the CW sketching method resulting in a sketch of size $65\,536\times 100$, which requires only around 140 MB of space in CSV format and fits into RAM. Clearly, we cannot compare the results to those on the original data set, but calculating the sum of squared distances between the true mean values of $\beta$ and the posterior mean of the model adds up to $3.741\cdot 10^{-6}$. The Bayesian regression analysis took $2781$ minutes.

In some cases the algebraic structure might strongly depend on a few observations or variables. In such situations it is important to identify these or to retain their contribution in the reduced data set. So far, our model assumptions did not suffer from such ill-behaved situations, but now we assess the performance of our method in this case. We construct data sets where an important part of the target variable falls into a subspace that is spanned by a small constant number of observations. Uniform random subsampling will pick these only with probability $O(\frac{1}{n})$. Oblivious subsampling techniques in general will have trouble identifying the important observations. In contrast, oblivious subspace embedding techniques preserve these effects with high probability. This effect is observed in practice even when comparing one sketch against the best of $1000$ repetitions of uniform random subsampling.

In conclusion, the simulation study indicates that our proposed method works well for simulated data sets, which are generally well-suited for conducting Bayesian linear regression. But even with a high variance of the error term (and thus a relatively bad model fit), our proposal leads to results similar to those one would obtain on the original data set. The running time of the analysis with the proposed sketches is largely independent of $n$, giving advantages for very large $n$. Since the embeddings can be read in sequentially, it is not necessary to load the whole data set into the memory at once, which reduces the required memory.

For CW embeddings, reading the data in and calculating the sketch only takes marginally longer than only reading the data in. In our experiments, we found that reading in and sketching takes around $1.01$ to $1.04$ times longer. This factor is typically higher for small data sets and lower for larger data sets (cf. Table \ref{tab:completetimed50}). However, when the number of variables is large it may be favorable to use SRHT, whose sketching time is only slightly larger but has considerably smaller embedding dimension.

 %Using our native C++ implementation of the CW sketch, it is possible to sketch the data at a rate of $1.7$ GB/s which makes it amenable to stream from sources that produce high throughput. Even the fastest SSD hard discs ($0.5$ GB/s) as well as 10 Gbit Ethernet ($1.25$ GB/s) produce less throughput, which makes them the bottleneck when streaming and sketching at the same time from these sources.

%%% Discussion
%\input{Discussion}

\section{Real Data Example}
\label{sec:realdata}

As a real data example, we consider the bike sharing data set taken from \cite{Gama14}, which is available in the UCI Machine Learning Repository \citep{Lichman:2013}. This is only meant as an exemplary application of the methods to a real data scenario and should not be mistaken for a complete statistical analysis of the data set. The bike sharing data set contains the number of rental bike users per hour over two years as well as additional information about the day and the weather. See Table \ref{tab:bikevars} for an overview  of the variables we use in the model. The data set contains some additional variables we do not employ, because they are highly correlated with the variables present in the model. We also made a change to the factor levels of the variable \emph{weathersit}. In the original data set, this variable contains 4 levels. The fourth level only is present 3 times out of total of $n=17\,379$ hours in the data set. To avoid any problems with such an underrepresented level, we combine levels 3 and 4 to obtain a factor with 3 levels. The original levels 3 and 4 stand for ``light rain'' and ``heavy rain''. The new level 3 can easily be interpreted as the presence of rain. For a more detailed description of all variables, please refer to the data set's web page on the UCI Machine Learning Repository. \footnote{http://archive.ics.uci.edu/ml/datasets/Bike+Sharing+Dataset}

\begin{table}[tb]
\begin{center}
\begin{tabular}{llll}
\toprule
Variable & Description & Remark\\
\midrule
\emph{cnt} & number of rental bikes used & $Y$-variable\\
\emph{season} & season of the year & factor (4 levels)\\
\emph{yr} & year (2011 or 2012) & factor (2 levels)\\
\emph{hour} & hour (0 to 23) & factor (24 levels)\\
\emph{holiday} & public holiday & factor (2 levels)\\
\emph{weekday} & day of the week & factor (7 levels)\\
\emph{weathersit} & weather (``clear'' to ``rain'') & factor (3 levels)\\
\emph{atemp} & apparent temperature & standardized\\
\emph{hum} & humidity & standardized\\
\emph{windspeed} & windspeed & standardized\\
\bottomrule
\end{tabular}
\end{center}
\caption{Variables from the bike sharing data set used in the model}
\label{tab:bikevars}
\end{table}%

The variable \emph{cnt} contains the number of rental bikes used per hour and is thus a count variable. However, there are around 850 distinct values, which makes analyzing \emph{cnt} as a continuous variable reasonable. When analyzing such variables with a linear regression model, transforming them using the square-root is a common procedure. After the transformation, the values of \emph{cnt} show some bi-modality, but fit reasonably well to the assumption of a normal distribution.

We use the transformed variable \emph{cnt} as $Y$-variable and all other variables in Table \ref{tab:bikevars} as $X$-variables. To handle the factor variables, we use the R-function \texttt{model.matrix} to create a design matrix, which is then passed on to \texttt{RaProR} and \texttt{rstan}. The resulting design matrix contains $n=17\,379$ observations and $d=39$ variables plus the intercept. Again, we calculate sketches for all three methods and with two different settings of $\varepsilon$. Because of the size of the data set relative to the number of variables, we choose $\varepsilon=0.15$ and $\varepsilon=0.2$ for the RAD and SRHT sketches. For CW, we choose values of $k$ that are closest to the target dimension of the other sketches. This results in the values given in Table \ref{tab:realsketchsize}.

\begin{table}[tb]
\begin{center}
\begin{tabular}{rrrrr}
\toprule
$d$ & $\varepsilon$ & RAD & SRHT & CW\\
\midrule
40 & 0.15 & $6\,767$ & $6\,767$ & $8\,192$\\
40 & 0.20 & $3\,807$ & $3\,807$ & $4\,096$\\
\bottomrule
\end{tabular}
\end{center}
\caption{Number of observations of the sketches for the bike sharing example. Different values of $\varepsilon$ are used for RAD and SRHT sketches; the target dimension of CW sketches is chosen to be the power of two closest to the size of the RAD and SRHT sketches} 
\label{tab:realsketchsize}
\end{table}%

The Bayesian model based on the original data set suggests that all mentioned variables are important for the modeling of the number of bikes used per hour. Figure \ref{fig:app:bikebppostall} in the appendix gives an overview of the posterior distributions for the model based on the original data set. As one might expect, the weather has a strong influence. More bikes are rented when the apparent temperature is high and, to a lesser extent, when the humidity and the wind speed are low. In clear or partly cloudy weather, the number of rented bikes is highest, but the negative influence of heavier clouds is comparatively small. Rainy weather, however, reduces the number of bike users more substantially. In addition to that, fall seems to be the most popular seasons for bike sharing. Spring and summer also have positive effects in comparison to winter, but the effect sizes are smaller. This might seem surprising at first, especially as the number of rental bike users is highest in summer. This might partly be an effect of the apparent temperature, which is generally higher in summer.

There is also a distinct hourly effect. During night time, especially between midnight and 5am, the number of rented bikes is greatly reduced. On the other hand, between 7am and 9pm, a lot of bikes are used, with two peaks at 8am and 5pm/6pm. This might indicate that the service is used by people transiting to and from work. Holidays -- which only includes days that would otherwise be a working day, so only Monday to Friday -- have a negative influence on the number of rental bike users. When taking the days of the week into account, Friday and Saturday have the highest positive effect while Sunday seems to be the least popular day. All of these effects based on days have a small influence compared to the variables mentioned before. Lastly, the variable \emph{yr} also has a positive influence which indicates a positive trend for this bike sharing service.

%+ Autokorrelationen ueberpruefen -> gibt es. Ignorieren oder darauf eingehen?
%+ Originalmodell in Worten beschreiben
%+ Tabelle: Abweichungen der betas (Quadratsumme)
% Graphik: Abweichungen fitted (Winkelhalbierende oder Boxplots)
% Graphik: einzelne betas als Boxplot o.ae.

All conclusions can similarly be derived from the models based on the sketches. Following our approach in Section \ref{sec:simstudy}, we first compare the resulting posterior mean values of $\beta$. Table \ref{tab:absdevfullbike} shows the sum of squared distances between posterior mean values of the original model and models based on the three sketching methods, using two values of $\varepsilon$ each. There is a general increase in the sum of squared distances for $\varepsilon=0.2$ compared to $\varepsilon=0.15$, but the amount differs depending on the sketching method. This should not be over-interpreted, however. As these values represent only one realization of a random subspace embedding, there is no evidence for systematic differences.

\begin{table}[tb]
\begin{center}
\begin{tabular}{lrrr}
  \toprule
$\varepsilon$ & RAD & SRHT & CW\\
  \midrule
0.15 & 1.790 & 2.349 & 0.907 \\ 
  0.2 & 6.511 & 2.732 & 1.657 \\ 
  \bottomrule
\end{tabular}
\end{center}
\caption{Sum of squared distances between posterior mean values of the original model and models based on the respective sketches for the bike sharing data set}
\label{tab:absdevfullbike}
\end{table}

To see the effects of the differences in the posterior means of $\beta$ on the level of the $y$-variable, which is the number of rental bikes used, we compare the fitted values as in Section \ref{sec:simstudy:fitted}. Again, we multiply the original design matrix $X$ with the posterior means of $\beta$, where the posterior mean values are obtained from the model on the original design matrix and the models on the respective sketches. Figure \ref{fig:fvdiffbikevsFbp} contains the six resulting boxplots. As in the simulation study, the differences of the fitted values are centered around zero with only small deviations. Further analysis indicates that the higher deviations occur when the number of bikes used is high, which means that the majority of differences in the fitted values are small relative to the observed value for that data point.

\begin{figure}[tb]
\begin{center}
\includegraphics[width = 0.75\textwidth]{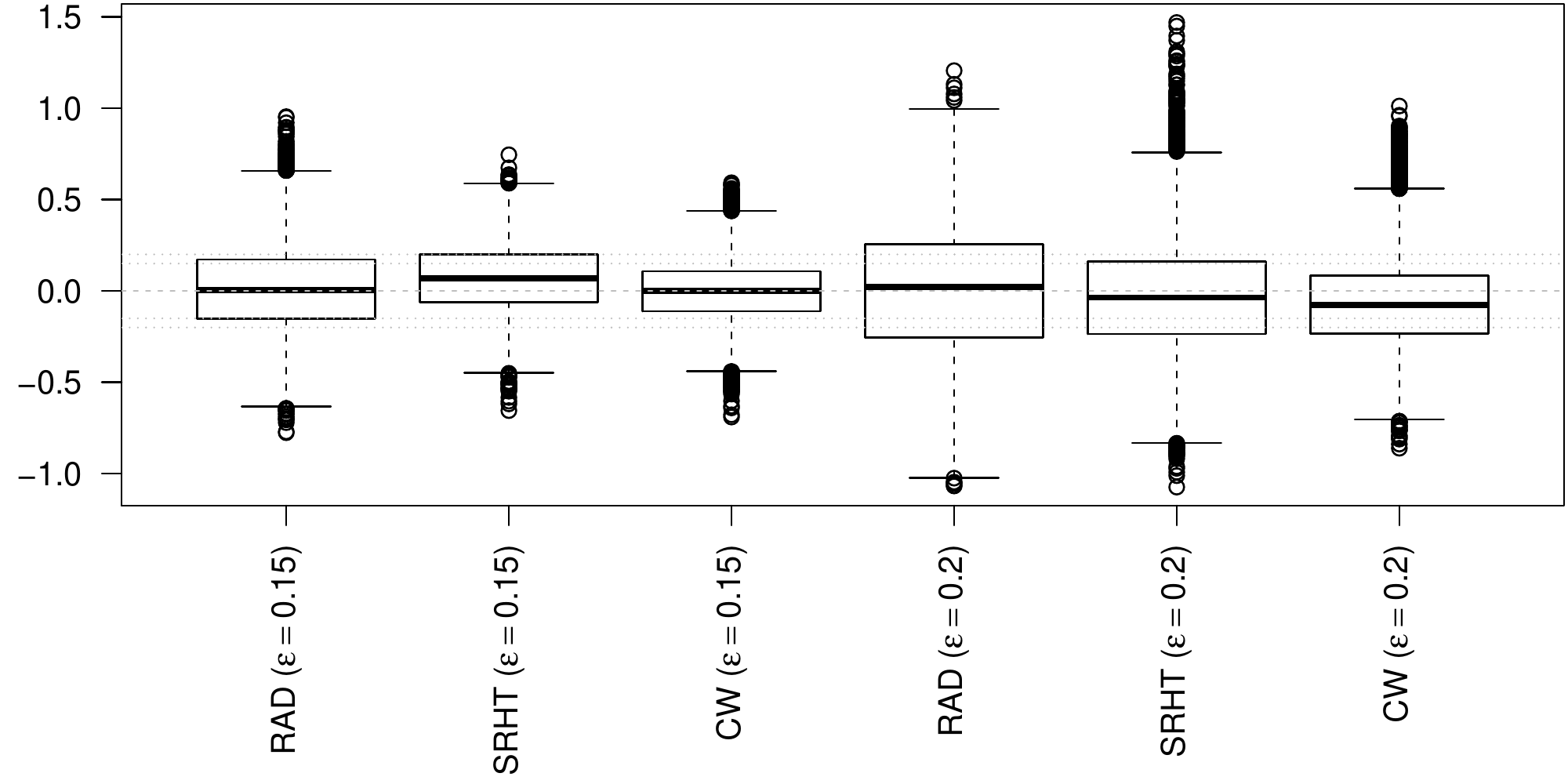}
\end{center}
\caption{Difference of fitted values according to models based on the respective sketching methods and fitted values according to model based on the original bike sharing data set}
\label{fig:fvdiffbikevsFbp}
\end{figure}

As a last step of our short analysis of the bike sharing data set, we will concentrate on the posterior distributions of the factors that take the weather situation into account. This factor has three levels in our data set. The first level stands for clear weather, which also includes partly cloudy hours, the second level stands for heavier clouds or mist while the third level models rainy weather, which includes light rain, heavy rain, thunderstorms, and snow. The different levels occur with different relative frequencies: around 66\% of the observed hours fall into level 1, 26\% into level 2 and 8\% into level 3.

\begin{figure}[tb]
\begin{center}
\includegraphics[width = 0.75\textwidth]{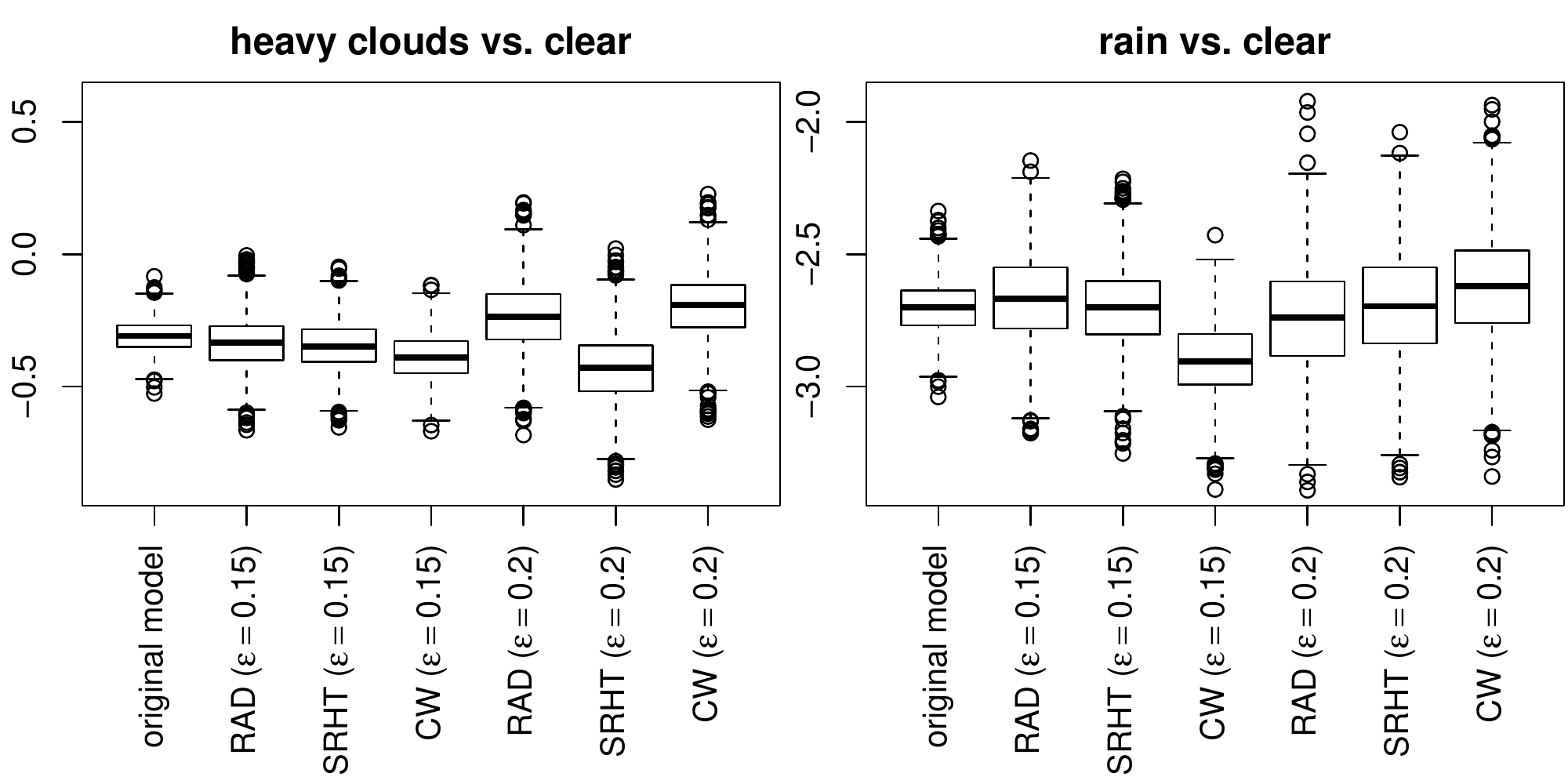}
\end{center}
\caption{Boxplots of MCMC sample for the two weather situation parameters based on the original data set and all sketches}
\label{fig:MCMCdistbikeweather}
\end{figure}

Figure \ref{fig:MCMCdistbikeweather} shows boxplots of the MCMC samples based on the original design matrix and the sketches. The values represent the marginal posterior distributions of the two dum\-my variables associated with the variable \emph{weathersit}. The scales of the boxplots are chosen such that one unit is of the same length on both $y$-axes. This allows for easy comparison of the variation in the two posterior distributions. Again, we can see that the embedding introduces additional variation, which depends on the value of the approximation parameter $\varepsilon$, but does not seem to be influenced by the choice of the sketching method. In addition, the posterior distributions for the factor ``heavy clouds'' show less variation compared to the posterior distributions for ``rain''. This is as one might expect as the number of occurrences is more than three times higher for ``heavy clouds'' and a larger number of observations tends to reduce the uncertainty. Nonetheless, it is interesting to observe that the variation introduced by the embedding seems to be a factor of the variation present in the original model.

While the effect of the factor ``rain'' is undoubtedly negative according to the original model and all sketches used here, the effect of factor ``heavy clouds'' is close to zero. In the original model, ``heavy clouds'' would also be seen as an influential factor when using the 95\% credible interval as a criterion. The conclusion is the same for all sketching methods when using $\varepsilon=0.15$. However, when using $\varepsilon=0.2$ and CW, ``heavy clouds'' would be seen as not influential. This stresses again that the endpoints of credible intervals based on sketches exhibit some additional variation and inference based on them may change, depending on the variation in the original model and the choice of $\varepsilon$. If variable selection is a focus of the regression analysis, we recommend choosing reasonably small $\varepsilon$ (cf. Figure \ref{fig:MCMCdist243beta11_33} and its discussion in Section \ref{sec:simstudy:postdist}).

This example underlines that our method also works well on real world applications when the original data follows the model assumptions reasonably well.

\section{Conclusion}
\label{sec:conclusion}
Our paper deals with random projections as a data reduction technique for Bayesian regression. We have shown how projections can be applied to compress the column\-space of a given data matrix with only little distortion. The size of the reduced data set is independent of the number $n$ of observations in the original data set. Therefore, subsequent computations can operate within time and space bounds that are also independent of $n$, regardless of which algorithm is actually used. While our focus was on MCMC and the No-U-Turn-Sampler in particular, we tried INLA as well and observed a considerable reduction in running time while achieving very similar results. However, our proposed reduction method is not limited to these approaches, making it highly flexible.

The presented embedding techniques allow for fast application to the data set and do not need the embedding matrices to be stored explicitly. Thus, only very little memory is needed while sketching the data. Furthermore, we have surveyed their useful properties when the computations are performed in sequential streaming as well as in distributed environments. These scenarios are highly desirable when dealing with Big Data \citep[cf.][]{welling:2014}.

We consider the situation where the likelihood is modeled using standard linear regression with a Gaussian error term. We show that the likelihood is approximated within small error. Furthermore, if an arbitrary Gaussian distribution is used as prior distribution, the desired posterior distribution is also well approximated within small error. This includes the case of a uniform prior distribution over $\mathds R^d$, an improper, non-informative choice that is widely used \citep[cf.][]{Gelman2014}. We also show the results to be $(1+O(\varepsilon))$ approximations of the distributions of interest in the context of Bayesian linear regression. As the structure of both mean and variance is preserved up to small errors, approximate sufficient statistics for the posterior distributions are well-recovered. This gives the user all the information that is needed for Bayesian regression analysis.

In our simulation experiments, we found that the approximation works well for simulated data sets. All three sketching methods we considered lead to results that are very similar to Bayesian regression on the full data set and the true underlying values. The running time for the MCMC analysis based on the sketches is independent of the number of observations $n$. The calculation of the embedding does depend on $n$, but requires little more time than the necessity of only reading the data. This is especially true when using the CW method. But for larger dimensions a CW embedding can be too large. In such a case, the denser SRHT construction also performs very well and is preferable because of its lower dependency on $d$. RAD has even lower dependency on $d$ but takes considerably more time to calculate.

We have applied the methods to a bike sharing data set by \cite{Gama14}. The approximation also works well on this real data example, giving very similar results to the original Bayesian regression, while adding little additional variation to the posterior distributions.

For future research, we would like to generalize our results to other classes of distributions for the likelihood and to more general priors. As a first step, we have used hierarchical models involving normal, Gamma and exponential distributions as hyperpriors. For normal and Gamma distributions, the results seem promising, whereas using exponential distributions seems more challenging. The recent results on frequentist $\ell_p$ regression of \cite{WoodruffZ13} might give rise to efficient streaming algorithms also in the Bayesian regression setting. Another interesting direction would be to consider Gaussian mixtures, since they allow to approximate any continuous distribution.

In real-world applications one might exhibit the domain-specific structure to further reduce the time and space bounds when these indicate that the data itself is of low rank or allows for sparse solution vectors.

\paragraph*{Acknowledgements}
We would like two thank the anonymous referees for their valuable comments and suggestions which helped to improve this manuscript. This work has been supported by Deutsche For\-schungs\-ge\-mein\-schaft (DFG) within the Collaborative Research Center SFB 876 ``Providing Information by Resource-Constrained Analysis", project C4.

%\newpage
\bibliographystyle{apalike}
\bibliography{literatur}

\newpage
\appendix
\renewcommand{\thefigure}{A.\arabic{figure}}    
  % redefine the command that creates the equation no.    
\setcounter{figure}{0}  % reset counter     
\section*{Appendix}
\label{sec:app}
\begin{figure}[h]
\begin{center}
\includegraphics[angle=270,width = 0.65\textwidth]{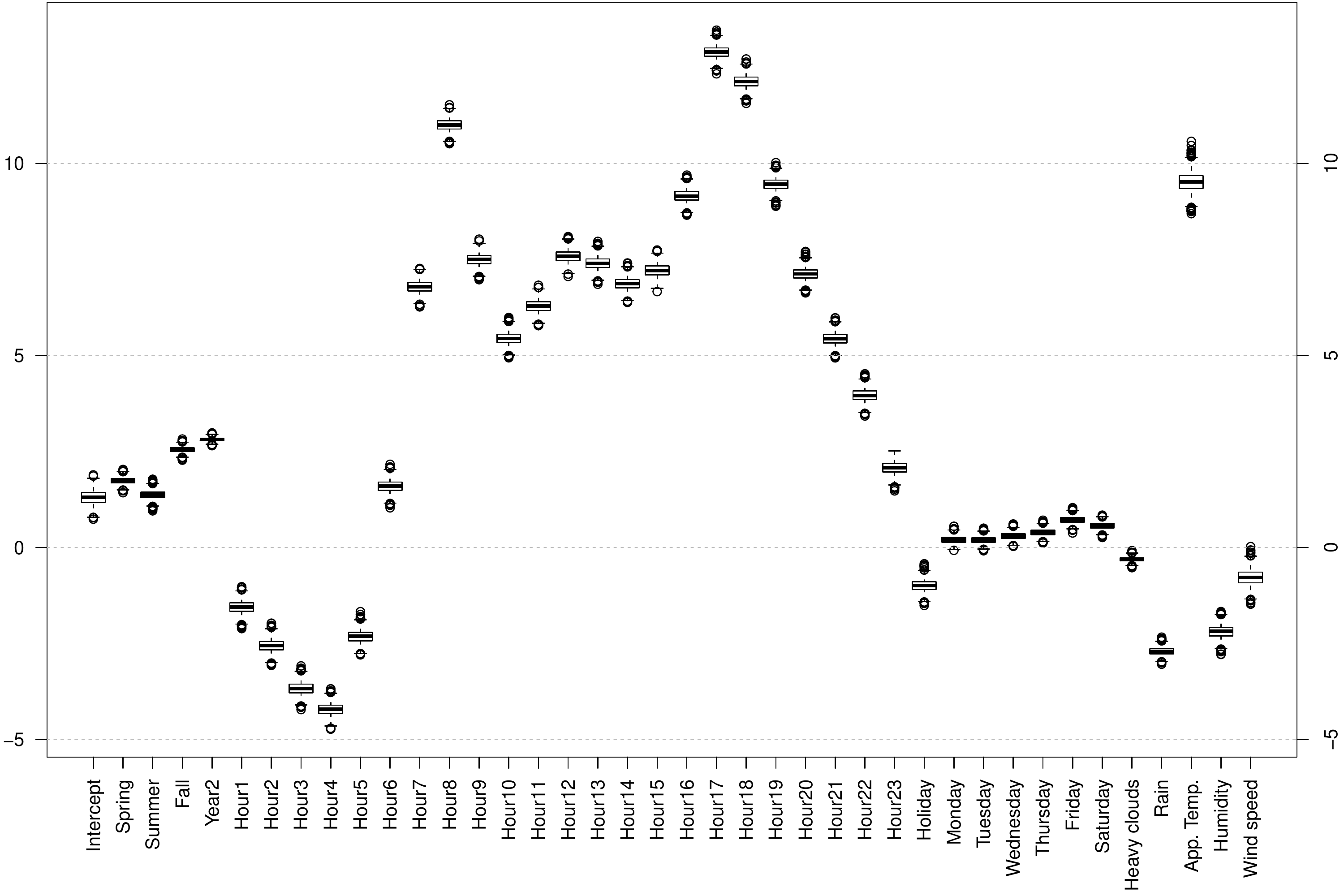}
\end{center}
\caption{Boxplots of the MCMC samples for all $\beta$ parameters of the bike sharing data set based on the original model}
\label{fig:app:bikebppostall}
\end{figure}

\end{document}